\newtheoremstyle{mytheoremstyle}{2pt}{1pt}{\itshape}{}{\bfseries}{.}{.5em}{} 
\theoremstyle{mytheoremstyle}
\xpatchcmd{\proof}{\topsep6\p@\@plus6\p@\relax}{}{}{}
\DeclareFontFamily{T1}{lcmtt}{\hyphenchar\font\m@ne}
\DeclareFontShape{T1}{lcmtt}{m}{n}{%
	<13.82><16.59><9><23.89><28.66><34.4><41.28>%
	ecltt8}{}
\setlist[description]{leftmargin=\parindent,labelindent=\parindent}
\preto{\@verbatim}{\topsep=0pt \partopsep=0pt }
\newtheorem{definition}{Definition}
\newtheorem{theorem}{Theorem}
\newtheorem{example}{Example}
\newenvironment{mycenter}[1][\topsep]
{\setlength{\topsep}{#1}\par\kern\topsep\centering}
{\par\kern\topsep}
\renewcommand\footnotetextcopyrightpermission[1]{} 
\begin{document}
	
		\author{Antonio A. Bruto da Costa}
		\orcid{0000-0002-4590-0665}
		\affiliation{%
			\institution{Indian Institute of Technology}
			\streetaddress{}
			\city{Kharagpur}
			\state{West Bengal}
			\postcode{721302}
			\country{India}}
		\email{antonio.cse.iitkgp@gmail.com}
		\author{Pallab Dasgupta}
		\orcid{0000-0002-4590-0665}
		\affiliation{%
			\institution{Indian Institute of Technology}
			\streetaddress{}
			\city{Kharagpur}
			\state{West Bengal}
			\postcode{721302}
			\country{India}}
		\email{pallab@cse.iitkgp.ac.in}
		\author{Goran Frehse}
		\affiliation{%
			\institution{VERIMAG Laboratory}
			\streetaddress{}
			\city{Grenoble}
			\state{}
			\country{France}}
		\email{goranf@gmail.com}
	\title[Formal Feature Interpretation of Hybrid Systems]{Formal Feature Interpretation of Hybrid Systems}
	
	\begin{abstract}
		In current practice a {\em formal} analysis of hybrid system models is assertion-based. The work presented here is based on {\em features} that look beyond functional correctness towards a quantitative evaluation of behavioural attributes. A feature defines a real-valued evaluation function over a specific set of traces. This article describes an improved method for the interpretation of features over hybrid automata models. It further demonstrates how Satisfiability Modulo Theory (SMT) solvers can be used for extracting behavioural traces corresponding to corner cases of a feature. Results are demonstrated on examples from the control and circuit domains.
		
	\end{abstract}

	\keywords{Hybrid Automata, Sequence Expressions, Features, Model Checking}
	
	\maketitle
	
	\renewcommand{\shortauthors}{XXXX et al.}
	
	\section{Introduction}\label{sec:introduction}
	
	The theory of {Hybrid Automata (HA)} has been extensively studied in the context of designing provably safe designs of embedded hybrid systems~\cite{Alur01,AB01,dang04}. The formal safety analysis of hybrid systems is becoming increasingly significant with wider proliferation of automated control in circuits and systems. 
	
	An important component of any formal verification framework is the mechanism for formally specifying the design intent. In the discrete domain, formalisms based on temporal logic have been widely adopted with the use of standard assertion languages, such as SystemVerilog Assertions (SVA)~\cite{SVA} and Property Specification Language (PSL)~\cite{PSL}. Analog Mixed-Signal (AMS) extensions of assertions have been explored as well~\cite{ams-ltl1,porv01,ASL2008} and provide constructs for assertions over real-valued attributes. Assertion based verification of {HA} has also been studied~\cite{HySTL16}, while tools such as SpaceEx~\cite{spaceex11} have been used to analyze timed and hybrid models of embedded control systems using reachability analysis and model checking. However, assertions in these languages, in and of themselves limit the information carried by their Boolean outcome. Our experience is that designers want to understand what the design is doing and how it behaves, not just the success/fail scenarios. This is naturally expressed as a quantitative real-valued measure.
	
	Existing literature on quantitative specifications~\cite{porv01,a_Ouaknine2008,b_Donze2010} is assertion based, and uses metrics, with positive values indicating truth, negative values indicating violations, and robustness being described as the distance of the quantity from zero. Some metrics (such as in~\cite{c_Jha2018}) are associated with uncertainty. Assertion-based languages are designed to be flexible with respect to the assertions written but their quantitative interpretations are restricted. On the other hand, the language of {\em features}, Feature Indented Assertions (FIA)~\cite{dyFET}, is designed to be flexible on the definition of the quantity, with the set of assertions that can be expressed limited to those that are sequences of predicates and events.
		
	Many properties used in practice concerning system attributes can be intuitively expressed as features. Related work exists in learning parameters of Signal Temporal Logic (STL) properties, such as~\cite{d_JhaTSSS_2017}, wherein the authors propose learning tight bounds on parameters of STL properties from system traces. The approach can indeed be used to learn those features that can be expressed as parameters in STL properties. However, note that expressing a feature to be learned as an STL property can require the use of additional parameters, thus making the analysis more expensive.  
	
	STL property based analysis is implemented in MATLAB toolboxes such as Breach~\cite{e_Donze2010} and S-Taliro~\cite{e_Annapureddy2011}.
	These can be used for parameter synthesis, robustness monitoring and parameter sensitivity analysis. The work presented in this manuscript is largely influenced by the semiconductor industry which finds expressing properties tedious in temporal logic based languages. As a result, for digital designs, the IEEE 1800-2012 Standard SVA language is widely used across the industry for expressing assertions for the validation and verification of digital circuits. FIA is developed over the fabric of SVA, which in practice enables the more intuitive expression of real-valued quantities over system traces. 
	
	To understand features, consider the {\em settling time} of a DC-DC Buck Regulator, defined as the time taken for the output voltage {\tt x1} to settle to below $Vr + \epsilon$ for two successive openings of the capacitor switch; $Vr$ is the rated voltage for the regulator. Booleanizing the notion of settling of {\tt x1} within 100 clock cycles in SVA, using propositional variables {\tt x1\_GE\_Vr} $\equiv$ {\tt (x1>=Vr+E)}, and {\tt swOpen} to mean the capacitor switch is open, yields the following sequence:
	
	{\scriptsize 
		\tt
		\hspace*{-0.5cm} x1\_GT\_Vr \#\#[0:100] first\_match(@(posedge swOpen)\&\& \\
		\hspace*{0cm}!x1\_GE\_Vr \#\#[0:\$] @(posedge swOpen) \&\& !x1\_GE\_Vr)
	}	
	
	The expression represents the regulator's behaviour of settling, that is the first time when the regulator's voltage output {\tt x1} is found to be less than {\tt Vr+E} for two consecutive openings of the buck regulator capacitor switch, specified as two successive capacitor switch open events, after having risen above {\tt Vr+E}. The semantics of the assertion depends on a clock and all sequence delays are in terms of this clock. A change of clock requires re-writing the assertion with delays consistent with the revised clock, thereby inviting human error. Additionally, this form of expression requires Booleanization and is non-intuitive. 
	
	The verification of a buck converter model against an expression like the one above would yield a Boolean outcome. However, the feature {\em settling time} is a real valued artifact. In FIA this is expressed by overlaying the computation of settling time over a sequence expressing the behaviour of settling. This is done using the power of {\em local variables} to store state variable values as the sequence matches, shown below in Example~\ref{ex:settlingTime}. FIA was introduced in ~\cite{dyFET}, wherein features were used to analyze systems in a simulation environment. The formal expression of features is based on the syntactic fabric of assertions, but the definition of assertions is overlayed with real valued functions that are computed over matches of underlying logical expressions. This enables the formal expression of definitions of standard features like {\em rise time}, {\em peak overshoot} and {\em settling time}, and other design specific features. 
	
	\begin{example}\label{ex:settlingTime}
		\emph{\textit{Settling Time:} The local variable {\tt st} is assigned in the antecedent and is used to define the feature value {\tt settlingTime} in the consequent.}
	\end{example}
	{\scriptsize
		\begin{verbatim}
		feature settlingTime(Vr,E);
		begin
		    var st;
		    (x1>=Vr+E) ##[0:$]
		        @+(state==Open) && (x1<=Vr+E), st=$time 
		            ##[0:$] @+(state==Open) && (x1<=Vr+E)
		    |-> settlingTime = st;
		end
		\end{verbatim}
	}
	
	The feature in Example~\ref{ex:settlingTime} has two parameters {\tt Vr} and {\tt E} that are used later in the contained sequence expression. {\tt st} is an uninterpreted local variable of the feature that is assigned the real time at the first opening of the switch after {\tt x1>=Vr+E}, but when {\tt x1<=Vr+E}. The variable {\tt settlingTime} has the same name as the feature, and is assigned the value of the local variable when the entire sequence matches. In the sequence expression of the feature, the notable differences with SVA are the following:
	\begin{enumerate}[topsep=0.1em]\setlength\itemsep{0em}
		
		\item Predicates over real valued signals (PORVs)~\cite{porv01}, such as {\tt (x1>=Vr+E)}, are allowed. PORVs can be over real variables or over the special variable {\tt state} which refers to the name of the mode of operation of the buck regulator automaton.
		
		\item {\tt @+} is used to denote the positive crossing of PORVs. For a predicate involving variable {\tt state}, this indicates that the state is entered. Similarly {\tt @-} may be used for negative crossings of PORVs.
		
		\item All intervals of the form {\tt \#\#[a:b]} are treated as {\em real time intervals}, as opposed to intervals countable in terms of the number of clock cycles in SVA semantics. This avoids rewriting the property if the clock cycle changes.
		
	\end{enumerate}

	The repertoire of work presented in this article is rooted in the use of features for the verification and analysis of hybrid systems and consists of the following: 
	\begin{enumerate}[topsep=0em]\setlength\itemsep{0em}
		\item A methodology to compute an over-approximation of the range of feature values for all possible runs of the system.\label{item:featureCompute}		
		
		\item A methodology for finding the extremal values of the feature range through successive refinement using SMT.\label{item:cornerCase}
	\end{enumerate}
	
	Methodology~\ref{item:cornerCase} makes its first appearance in this article. Methodology 1 was first reported in~\cite{CostaD15}, where a technique for manually transforming models was outlined, but only for very specific types of features. A more general technique was later reported in~\cite{Costa16,CostaD17} with its integration into simulation flows in~\cite{Costa_VLSID_2017}. 
	Here, at the heart of Methodology 1, we present an improved computation of the {\em product} automaton of~\cite{Costa16}. The product in~\cite{Costa16} implements conservative semantics for the feature's sequence expression, yielding a feature range that ignores some matches of the sequence-expression, and also includes matches with broader semantics than intended. The semantics of the keyword {\tt first\_match} used in the SVA sequence-expression enforces predictability in the match, by ensuring that only the earliest observation of the contained sub-sequence is matched. In this article, we extend the more general product described in~\cite{Costa16} with {\em first match region semantics} described in Section~\ref{sec:product}. In a feature, a combination of first-match and non-first-match semantics may be used (as in SVA). However, in this article, for simplicity we assume that all features are evaluated with first-match region semantics. In the past, Methodology 1 was primarily used with reachset computation tools like SpaceEx~\cite{spaceex11} to compute feature ranges. However, this is not always the best solution because in our experience the results tend to be conservative. In practice, we find that tighter feature ranges can be computed using SMT tools, at the price of longer run-times, and possibly choking if the unfolding is too large. In summary, we propose two technologies, one which is faster but coarser, and another which is slower but more precise. We present various case studies on hybrid systems from the circuit and control domains. 
	
	\section{Preliminaries}\label{sec:functionalModeling}
	
	Given a system defined as a {HA} $\mathcal{H}$, and a feature $F$, the objective is to find the range of valuations of $F$ over all possible runs of $\mathcal{H}$. This section presents the requisite definitions of HA, and feature semantics over runs of a HA.
	
	\subsection{Hybrid Automata}
	A hybrid automaton is defined as follows:
	\begin{definition}\label{def:hybridAutomaton}
		\textbf{Hybrid Automaton}\label{HA_def}\\
		\emph{A hybrid automaton~\cite{Alur01} is a collection $\mathcal{H}$ $=$ $(Q,$ $X,$ $Lab,$ $Init,$ $Dom,$ $Edg,$ $Act)$, where:}
		{
			\begin{itemize}[topsep=0em]\setlength\itemsep{0em}
				\item \emph{$Q$ is the set of} \textbf{discrete states}\textit{ also known as} \textbf{locations};
				\emph{$X$ is a finite set of real-valued variables. A {\em valuation} is a function $\nu: X \rightarrow \mathbb{R}$. Let $\mathcal{V}(X)$ denote the set of valuations over $X$};
				\emph{$Lab$ is a} finite set of \textbf{synchronization labels};
				\emph{$Init \subseteq Q \times \mathcal{V}(X)$ is a set of} \textbf{initial states};
				\emph{$Dom(l)$ : $Q \rightarrow 2^{\mathcal{V}(X)}$ is a \textbf{domain}. \emph{Dom(l) $\subseteq \mathbb{R}^n$} is function that assigns a set of continuous states to each discrete state $l \in Q$.}
				\item \emph{$Edg$ is a set of \textbf{edges}, also called transitions. Each edge $e=(p,a,\mu,r)$ consists of a source location $p \in Q$, a target location $r \in Q$, a synchronization label $a \in Lab$, and a transition relation $\mu \subseteq \mathcal{V}(X) \times \mathcal{V}(X)$. A transition $e$ is enabled in state $(p,\nu)$ if for some valuation $\nu' \in \mathcal{V}(X)$, $(\nu,\nu') \in \mu$. We require that for each location $p \in Q$, there be a stutter transition of the form $(p,\kappa,\mu_{ID_X},p)$, $ \mu_{ID_X} = \{(\nu,\nu)| \nu \in \mathcal{V}(X)\}$.} 
				\item \emph{$Act$  is a function that assigns to each location a (possibly infinite) set of activities, where each activity $f : \mathbb{R}^{\geq 0} \rightarrow \mathcal{V}(X)$ represents an evolution of the variables over time. The set of activities is usually defined implicitly as the set of solutions to a system of differential equations or inclusions. We denote the expression associated with the time derivative of a variable $x \in X$  in location $p \in Q$ as $flow^x_p$.}
				
			\end{itemize}
		}
		\emph{A state of $\mathcal{H}$} is given as $(p,\nu) \in Q \times \mathcal{V}(X)$. ~\hfill$\Box$
	\end{definition}
	For valuation $\nu$, we use $\nu_{\downarrow U}$ as the projection of $\nu$ on the set of variables $U \subseteq X$. For variable $u$, $\nu_{\downarrow u}$ is the value of variable $u\in X$ in state $\nu$. Similarly for a set of valuations $\mathcal{R}$, $\mathcal{R}_{\downarrow U}$ and $\mathcal{R}_{\downarrow u}$ are respectively the projection of $\mathcal{R}$ on the variable set $U$, and variable $u$ respectively.
	{For an edge $e=(p,a,\mu,r)$, $e \in Edg$, $G(\mu) = \{\nu| (\nu,\nu') \in \mu\}$. $G(\mu)$ is commonly known as the {\em transition guard} and is often represented as a set of predicates over variables in $X$. Similarly $R(\mu,\nu) = \{\nu'| (\nu,\nu') \in \mu\}$, is known as the {\em reset relation}, and most often appears as a function, i.e. $R(\mu,\nu) = \nu'$, $(\nu,\nu') \in \mu$. 
		When a system consists of multiple interacting components, we assume that a parallel composition of automata is available prior to applying the algorithms presented in this article. The methods discussed in this article are applied on linear hybrid systems that have monotonically increasing or decreasing variable dynamics in each location. Non-linear systems can be approximated as piece-wise affine models using techniques such as {\em hybridization}~\cite{asarin2007}. Furthermore, a location with non-monotonic variable dynamics can be transformed into an equivalent model with location-wise monotonic variable dynamics. The constraint of monotonicity is used in the article to accommodate existing tools (which use {\em may}, non-urgent, semantics on transitions). With tools that use urgent semantics, this restriction can be lifted.
	}
	
	\begin{definition}\label{def:run}
		\textbf{Run of the hybrid system $\mathcal{H}$}
		
		A \textit{run} of the hybrid system $\mathcal{H}$, is a finite or infinite sequence
		\begin{mycenter}[0.1em]
			$\rho: ~~ \sigma_0 \mapsto^{t_0}_{f_0} \sigma_1 \mapsto^{t_1}_{f_1} \sigma_2 \mapsto^{t_2}_{f_2} \cdotp \cdotp \cdotp$ 
		\end{mycenter}
		of states $\sigma_i=(l_i,\nu_i)$, non-negative reals $t_i \in \mathbb{R}^{+}$, 
		and activities $f_i$ of location $l_i$, such that for all $i\geq 0$,
		\begin{enumerate}[topsep=0em]\setlength\itemsep{0em}
			\item $f_{i}(0) = \nu_i$,
			\item For all $0 \leq \theta \leq t_i$, $f_{i}(\theta) \in Dom(l_i)$,
			\item There exists an edge $(l_i,a,\mu,l_{i+1})$ such that  $(f_i(t_i),\nu_{i+1}) \in \mu$ and $\nu_{i+1} \in Dom(l_{i+1})$. 
			~\hfill$\Box$
		\end{enumerate}
	\end{definition}
	
	The {HA} model for a two location DC-DC Buck Regulator~\cite{buck},  in Figure~\ref{fig:ha:buck}, presents analysis complexities due to its high location switching frequency, yet is simple enough to help explain the notion of feature analysis. It has two variables, the voltage across the load ($x_1$) and the load current ($x_2$). The dynamics in each mode of operation may be found in~\cite{buck}. The notation $u'$ in a reset relation on an edge is the value of $u$ after the transition is taken. This model is used as a running example in this article along with the feature {\tt Settling Time} described in Example~\ref{ex:settlingTime}.
	
	\begin{figure}[t!]
		\centering
		\includegraphics[scale = 0.8]{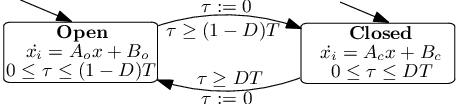}
		\caption{HA model of a DC-DC Buck Regulator~\cite{buck}} \label{fig:ha:buck}
	\end{figure}
	
	\subsection{Feature Semantics for Hybrid Automata}\label{sec:featureSemantics}
	This section presents the semantics of features which are an improved version of~\cite{Costa16}. We introduce stricter first-match semantics for feature matches in Section~\ref{sec:product}, while a more general semantic for feature-matches is presented here.
	The language for expressing features, FIA, uses \textit{Predicates Over Real Variables} (PORVs)\cite{porv01}. A feature is formally defined using the following syntax,\\	
	{\scriptsize
		\tt
		\begin{tabular}{ll}
			
			~~&feature $\mathcal{F}_{name}$ ($L_{p}$);\\
			~~&begin\\
			~~&\hspace*{1cm}var $\mathcal{L}$;\\
			~~&\hspace*{1cm}$S$ |-> $\mathcal{F}_{name} = \mathcal{F}$ ;\\
			~~&end\\
			
		\end{tabular}
	}\\	
	where, $\mathcal{F}_{name}$ is the feature name, ${L_p}$ and $\mathcal{L}$ are respectively the list of parameters and the list of local variables used in the body of the feature. $\mathcal{F}_{name}$ is a special variable representing the value of the feature assigned to it in the expression $\mathcal{F}$. $S$ is a sequence expression of the form,
	\begin{mycenter}[0.1em]
		$s_1 $ {\tt \#\#} $\tau_1 ~~ s_2$ {\tt \#\#} $ \tau_2  ~~ ... $ {\tt \#\#} $\tau_{n-1} ~~ s_n$
	\end{mycenter}
	and $\mathcal{F}$ is a linear function over $\mathcal{L}$ which assigns the feature value. $\tau_i$ represents a time interval, also referred to as a {\em delay operator}, and is of the form $[a:b]$, where $a,b \in \mathbb{R^+}, a\leq b$, and additionally $b$ can be the symbol $\$$, representing infinity.
	
	Sub-expressions $s_1$, $s_2$, ..., $s_n$ are each of the form "$D~\wedge~E~,~\mathcal{A}$", 
	where $D$ is a Boolean expression of PORVs in disjunctive normal form, $E$ is an optional event and $\mathcal{A}$ is an optional list of 
	comma-separated local variable assignments. For sub-expression $s_i$, we use $Cond(s_i) \equiv D_i \wedge E_i$ to represent a Boolean expression of PORVs and an event, and $\mathcal{A}_i = [A_i^1,A_i^2,...,A_i^k]$ is a list of $k$ local variable assignments in $s_i$. 
	The feature expression \mbox{$S$ {\tt|->} $\mathcal{F}$} is interpreted as the
	computation of $\mathcal{F}$ whenever there is a match of sequence expression $S$. We use the notation $S_i^j$, $1\leq i\leq j\leq n$ to denote the sub-sequence expression $s_i $ {\tt \#\#} $\tau_i~... $ {\tt \#\#} $\tau_{j-1} ~~ s_j$.

	Given run $\rho$ of $\mathcal{H}$, the following definitions indicate what it means for a feature to match $\rho$.
	
	
	{
		\begin{definition}\label{def:eventMatch}
			\textbf{Event Match:}
			An event $E \equiv @^+(P)$ matches in a run $\rho: ~~ \sigma_0 \mapsto^{t_0}_{f_0} \cdotp \cdotp \cdotp \mapsto^{t_{i-1}}_{f_{i-1}} \sigma_i \mapsto^{t_{i}}_{f_{i}} \cdotp \cdotp \cdotp$ at index $i$ iff $i>0, t_{i-1} > 0, \forall_{\theta \in [0:t_{i-1})} (l_{i-1},f_{i-1}(\theta)) \nvDash P \bigwedge \sigma_i \vDash P$.
			
			We define $@^-(P) \equiv @^+(\neg P)$ and $@(P) \equiv @^+(P) \vee @^-(P)$.
			
			
			
			
			
			We use the notation $\sigma_i \vDash_{\rho} E$ to denote the fact that the event E matches in the run $\rho$ at index $i$.~\hfill$\Box$
		\end{definition}
	}
	To extend predicates to be evaluated over locations of the {HA}, for state $\sigma = (l,\nu)$, a predicate $P$ can also take the form, {\tt state == $l$}, where {\tt state} is a special variable denoting the location label.
	
	\begin{definition}\label{def:models}
		The notation $\sigma \vDash_{\rho} s$, where $s$ is treated as $Cond(s)$ and $\sigma = (l,f(t))$, is extended to conjunctions and disjunctions
		of PORVs and events recursively, as defined below. Note that $s$ does not have any delay operators.
		\begin{itemize}[topsep=0em]\setlength\itemsep{0em}
			\item $\sigma \vDash_{\rho} P$ iff $P$ is a PORV and $P$ is true for signal 
			valuation $f(t)$, or $P \equiv$ {\tt (state == $l$)}.
			\item $\sigma \vDash_{\rho} C$, where $C = P_1 \wedge P_2 \wedge ... \wedge P_n$, 
			$P_i$ is a PORV iff $\forall_{i=1}^{n} ~ \sigma \vDash P_i$.
			\item $\sigma \vDash_{\rho} D$, where $D = C_1 \vee C_2 \vee ... \vee C_n$, 
			$C_i$ is a conjunction of PORVs iff $\exists _{i=1}^{n} ~ \sigma \vDash C_i$. 
			\item $\sigma \vDash_{\rho} D \wedge E$, where $D = C_1 \vee C_2 \vee ... \vee C_n$, 
			$C_i$ is a conjunction of PORVs and $E$ is an optional event, iff 
			$\sigma \vDash_{\rho} D ~ \wedge ~ \sigma \vDash_{\rho} E$. 
		\end{itemize}
		For hybrid system $\mathcal{H}$ and sub-expression $s$, we say that $l \vDash s$ if for {some} $\sigma = (l,\nu)$, $\sigma \vDash_{\rho} s$. For sub-expression $s_j = D_j \wedge E_j$, $C^j_i$ is the $i^{th}$ conjunct term in $D_j$. $\Upsilon^j_i$ is the {\tt state} context for $C_i^j$, i.e.  $\Upsilon^j_i = l$ iff ({\tt state} $== l$) is a PORV in $C_i^j$. Similarly $\Upsilon^j$ is the state context of event $E_j$ in $s_j$.\footnote{We assume for simplicity that a conjunct in a sub-expression of the sequence does not have any contradictory `state' constraints.}
		~\hfill$\Box$
	\end{definition}
	

	
	\begin{definition}\label{def:seqExprMatch}
		\textbf{Match $\mathcal{M}$ of Sequence Expression $S$:}	A run $\rho: ~~ \sigma_0 \mapsto^{t_0}_{f_0} \sigma_1 \mapsto^{t_1}_{f_1} \sigma_2 \mapsto^{t_2}_{f_2} \cdotp \cdotp \cdotp$ has a match $\mathcal{M} = \langle {i_1}, \ldots, {i_n} \rangle$ of the sequence expression 
		$S=s_1 $ {\tt \#\#} $\tau_1 ~~ s_2$ {\tt \#\#} $ \tau_2  ~~ ... $ {\tt \#\#} $\tau_{n-1} ~~ s_n$, $n\geq 1$ if  $\forall_{j=1}^{n-1} i_j\leq i_{j+1}$ and the following conditions hold:
		\begin{itemize}[topsep=0em]\setlength\itemsep{0em}
			\item $\sigma_{i_1} \vDash_{\rho} s_1$, 
			\item $\sigma_{i_2} \vDash_{\rho} s_2$ and $t_{i_1} + ... + t_{i_2 - 1} \in \tau_1$,  
			\item and so on ... until,
			\item $\sigma_{i_n} \vDash_{\rho} s_n$ and $t_{i_{n-1}} + ... + t_{i_n - 1} \in \tau_n$ ,
		\end{itemize}
		Local variables associated with $s_j$ are assigned values from variable valuations in state $\sigma_{i_j}$. Note that there can be multiple matches of $S$ in $\rho$, and multiple runs of $H$ that match $S$. Each match $\mathcal{M}$ defines a feature value, denoted $Eval(\mathcal{M},\rho,\mathcal{F})$, computed as the value of feature expression $\mathcal{F}$ over values of the local variables assigned during match $\mathcal{M}$ in run $\rho$.  ~\hfill$\Box$
	\end{definition}
	\begin{definition}\label{def:featureComputation}
		\textbf{Feature Range of a Hybrid Automaton:} Given a feature sequence expression $S$ and a feature
		computation function $\mathcal{F}$, that computes the value of $F_{name}$, the feature range $[\mathcal{F}_{min}, \mathcal{F}_{max}]$ of a hybrid automaton $H$ is computed as follows:
		\begin{itemize}[topsep=0em]\setlength\itemsep{0em}
			\item $\mathcal{F}_{min}$ = $\min \{~Eval(\mathcal{M},\rho,\mathcal{F})~|~ \rho$ is a run in $H$ and $\mathcal{M}$ is a match of $S$ in $\rho$ $\}$
			\item $\mathcal{F}_{max}$ = $\max \{~Eval(\mathcal{M},\rho,\mathcal{F})~|~ \rho$ is a run in $H$ and $\mathcal{M}$ is a match of $S$ in $\rho$ $\}$
		\end{itemize}
		max and min are computed over all matches in all runs $\rho$ of $H$.~\hfill$\Box$
	\end{definition}	
	
	\section{Method-1: Feature Transformations}\label{sec:featureTransformations}
	The problem of feature analysis can be formally defined as follows:
	Given a {HA} $\mathcal{H} = (Q, X, Lab, Init, Dom, Edg, Act)$ and a feature ${F}$, we wish to compute $\mathcal{F}_{min}$
	and $\mathcal{F}_{max}$ over all runs $\rho$ of $\mathcal{H}$.
	
	The methodology consists of the three steps shown in Figure~\ref{fig:methodology1}. Compared to~\cite{Costa16}, the Feature Automaton definition is re-written to better capture the feature intent of Section~\ref{sec:featureSemantics}, and in Section~\ref{sec:product} we present an improved construction of the product automaton. 
	
	\subsection{Feature Automaton Construction}
	The feature automaton is a monitor automaton which is similar to a HA, but allows guards to be written with predicates over location labels and events. Additionally, unlike the HA in Definition~\ref{def:hybridAutomaton}, a feature automaton also has an accept location.
	
	Given a {HA} $\mathcal{H}$ $=$ $(Q_H,$ $X_H,$ $Lab_H,$ $Init_H,$ $Dom_H,$ $Edg_H,$ $Act_H)$, a feature $F$ with local variable set $\mathcal{L} = \{l_0,l_1,...,l_m\}$, 
	sequence expression $S=s_1 $ {\tt \#\#} $\tau_1 ~~ s_2$ {\tt \#\#} $ \tau_2  ~~ ... $ {\tt \#\#} $\tau_{n-1} ~~ s_n$ and feature computation expression $\mathcal{F}$ that assigns the feature value to $\mathcal{F}_{name}$, we construct the feature automaton as follows:
	\begin{definition}\label{def:featureAutomaton}
		\textbf{Feature Automaton}: \label{FA_def}	\emph{A Feature Automaton for HA $\mathcal{H}$ is a collection $M_F = (Q, Z, X, V, C, E, Init, q_{\mathcal{F}})$, where:} 
		
		\begin{itemize}[topsep=0em]\setlength\itemsep{0em}
			\item \emph{$ Q = \{q_{1}, q_{2}, ... , q_{{n+1}}, q_{\mathcal{F}}\} \cup Z$ is the set of 
				feature locations. Intuitively, location $q_{i}$ is reached when
				the sequence expression has matched upto $s_{i-1}$ and it awaits the 
				match of $s_{i}$ within the time interval $\tau_{i-1}$. $q_{\mathcal{F}}$ is the {\em accept} location;} 
			\item \emph{Z = $\{q_{i,j} |~ |\mathcal{A}_i|>1, 1\leq i \leq n, 1\leq j \leq |\mathcal{A}_i|-1\} \cup \{q_{n+1}, q_{\mathcal{F}}\}$  is a set of
				\textbf{pause locations} where time does not progress. If a subexpression has multiple assignment statements, then a sequence of pause locations are added corresponding to all but the first assignment, to capture the order in which these assignments are made. Pause locations 
				$q_{i,j}$ are added if $|\mathcal{A}_i| > 1$.  State $q_{i,j}$ is reached when the $j^{th}$ assignment of $s_i$ has been executed;
			}
			\item $X = X_H$; 
			$V$ is the set of feature variables, $V = \mathcal{L} \cup \{\mathcal{F}_{name}\}$;
			\emph{C = $\{t, lt\}$ is a set of \textbf{timers} where $t$  measures cumulative time along a match, and $lt$ is a location timer, measuring time spent in every location. All timers are initially 0;}
			\item \emph{We augment the assignment list $\mathcal{A}_i$ for sub-expression $s_i$ with the assignment {\tt lt:=0} for the {\em location timer} $lt$.}
			\item \emph{An event of the form $@^*(x \sim a)$ is associated with PORVs as follows:
				\begin{itemize}
					\item $@^+(x \geq a)$ is associated with $(x==a)$ and $flow_q^x > 0$ 
					\item $@^-(x \leq a)$ is associated with $(x==a)$ and $flow_q^x < 0$ 
			\end{itemize}}
			\item \emph{$E\subseteq Q\times \mathcal{V}(Q_H \cup X\cup V\cup C) \times Q$ is the set of edges defined by the following rules: 
				\begin{itemize}
					\item $\forall_{1\leq i \leq n}(|\mathcal{A}_i| = 1) \rightarrow (q_{i},\mu_i,q_{{i+1}}) \in E$
					\item $ \forall_{1\leq i \leq n}(|\mathcal{A}_i|> 1) \rightarrow$ 
					$(q_{{i}},\mu_{i},q_{i,1}) \in E ~ \wedge $
					$\forall_{1\leq j < |\mathcal{A}_i|-1}$ $(q_{i,j},\mu_{i,j},q_{i,j+1})$ $\in E$ $\wedge$ 
					$(q_{i,{|A_{i-1}|}},\mu_{i,{|\mathcal{A}_{i-1}|}},q_{{i+1}}) \in E $.
					\item $\forall_{1 \leq i < n} ~\mu_i = (Cond(s_i) \wedge (lt\in \tau_{i}) \wedge A_i^1 \wedge \{{\tt lt'==0}\})$; $\mu_n = (Cond(s_n) \wedge A_n^1)$
					\item $\forall_{1 \leq i < n} ~\forall_{1 \leq j \leq |\mathcal{A}_i|-1} ~ \mu_{i,j} = true \wedge A_i^{j+1}$ 		
					\item $(q_{{n+1}},\mu_{F},q_{\mathcal{F}}) \in E$, where $\mu_F = \{\mathcal{F}_{name}' == \mathcal{F}\}$
					\item For relation $\mu$, $\omega$ denotes the projection of the relation $\mu$ onto $\mathcal{V}(X\cup V \cup C) \times \mathcal{V}(X\cup V \cup C)$.
			\end{itemize}}
			\item {\em $Init = \{q_1\} \times [0]^{|V\cup C}$ is the set of initial states.}
		\end{itemize}
		\emph{A state of $M_F$} is given as $(q,\nu) \in Q \times \mathcal{V}(X\cup V\cup C)$.~\hfill$\Box$
	\end{definition}
	
	\begin{figure}[t]
		\centering
		\includegraphics[scale=0.75]{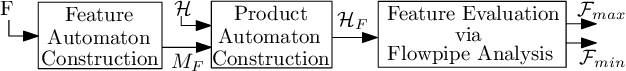}
		\caption{Methodology-1: Feature Transformations}\label{fig:methodology1}
	\end{figure}

	\begin{figure}[t]
		\centering
		\includegraphics[scale=0.9]{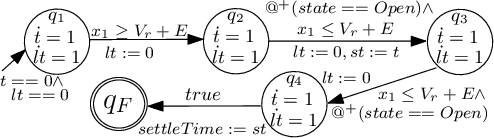}
		\caption{Feature Automaton for feature {\tt Settling Time}.}\label{fig:featureAutSettleTime}
	\end{figure}

	For feature {\tt Settling Time} of Example \ref{ex:settlingTime}, the FA is shown in Figure~\ref{fig:featureAutSettleTime}. The FA has two timer variables, $t$ for measuring time along the entire run, and $lt$ for measuring the time spent in each location of the FA, indicative of delays separating subexpressions in the feature sequence expression. Each location of the FA represents the match of some feature sub-expression. $q_i$ represents that the temporal sequence of events and PORVs leading up to (but not including) the $i^{th}$ sub-expression has been observed. The transition between $q_i$ and $q_{i+1}$ is guarded by the Boolean expression of PORVs and events corresponding to the $i^{th}$ sub-expression; the associated set of assignments to local variables are computed along this transition. Progressing	along  transitions between locations of the automaton corresponds to matching each sub-expression. 
	When the $n^{th}$ sub-expression matches, the entire sequence expression has matched and the automaton  transitions to state $q_{n+1}$. At $q_{n+1}$, all local variables hold values assigned to them along the match, and the feature is computed along the unguarded transition from $q_{n+1}$ to $q_{\mathcal{F}}$, the accept location of the automaton.

	
	
	\begin{definition}\textbf{Acceptance of run $\rho$ of $\mathcal{H}$ by $M_F$:}
		Run $\rho: ~~ \sigma_0 \mapsto^{t_0}_{f_0} \sigma_1 \mapsto^{t_1}_{f_1} \sigma_2 \mapsto^{t_2}_{f_2} \cdotp \cdotp \cdotp $ of $\mathcal{H}$ is accepted by feature automaton $M_F$ for feature $F$ with sequence expression $s_1 $ {\tt \#\#} $\tau_1 ~~ s_2$ {\tt \#\#} $ \tau_2  ~~ ... $ {\tt \#\#} $\tau_{n-1} ~~ s_n$ iff $\exists$ $\sigma_{i_1},$ $\sigma_{i_2},...,\sigma_{i_n},$ such that $i_j \leq i_{j+1}$, where, $\sigma_{i_1} \vDash s_1$;  
		\begin{itemize}[topsep=0em]\setlength\itemsep{0em}
			\item $\sigma_{i_2} \vDash s_2 \bigwedge (lt = \sum_{k=i_1}^{i_2 - 1} t_k) \in \tau_1$; $\ldots$
			\item $\sigma_{i_n} \vDash s_n \bigwedge (lt = \sum_{k=i_{n-1}}^{i_{n}-1} t_k) \in \tau_{n-1}$    ~\hfill$\Box$
		\end{itemize}
	\end{definition}
	
	\begin{theorem}
		Given a feature, $F$ in FIA, for {HA} $\mathcal{H}$,
		the feature automaton $M_F = (Q, Z, X, V, C, E, q_{\mathcal{F}})$  for $F$ correctly captures the following feature 	semantics:\label{theorem:FeatureAcceptance}
		\begin{enumerate}[topsep=0em]\setlength\itemsep{0em}
			\item[A] If a run $\rho$ of $\mathcal{H}$ yields a match $\mathcal{M}$, then the run $\rho$ is
			accepted by feature automaton $M_F$ with the same valuation as $Eval(\mathcal{M},\rho,\mathcal{F})$.
			\item[B] If a run $\rho$ of $\mathcal{H}$ is accepted by $M_F$ with valuation $\gamma$, then $\rho$ has a match $\mathcal{M}$, such that $Eval(\mathcal{M},\rho,\mathcal{F}) = \gamma$.
		\end{enumerate}
	\end{theorem}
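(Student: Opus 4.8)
The plan is to establish directions A and B simultaneously by proving a stronger inductive invariant that ties each partial match of the sequence prefix to a monitoring execution of $M_F$. Recall that by Definition~\ref{def:seqExprMatch} a match $\mathcal{M}=\langle \sigma_{i_1},\dots,\sigma_{i_n}\rangle$ is a choice of states in $\rho$ at which $s_1,\dots,s_n$ successively hold while the accumulated delays lie in $\tau_1,\dots,\tau_{n-1}$. I would run $M_F$ as a monitor reading the states $\sigma_i=(q_i,\nu_i)$ of $\rho$ in order, and prove, for each prefix length $1\le k\le n$, the invariant: a monitoring execution of $M_F$ reaches location $q_{k+1}$ at the point of $\rho$ corresponding to $\sigma_{i_k}$, with the timer $t$ reset at the previous sub-expression match and with every feature variable in $V$ holding the value obtained by applying $\mathcal{A}_1,\dots,\mathcal{A}_k$ evaluated at $\sigma_{i_1},\dots,\sigma_{i_k}$, \emph{if and only if} $\langle\sigma_{i_1},\dots,\sigma_{i_k}\rangle$ is a match of the prefix $S_1^k$. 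Directions A and B then follow from the $k=n$ case together with the single unguarded edge $(q_{n+1},\mu_F,q_{\mathcal F})$.

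For the base case $k=1$, the only edge leaving the initial location $q_1$ is $(q_1,\mu_1,q_2)$ with guard $G(\mu_1)=s_1$ (no timing predicate, since there is no preceding sub-expression). By Definition~\ref{def:models} the guard $s_1=D\wedge E$ holds at a state iff the DNF part $D$ and the event $E$ both match there, so the edge fires precisely at a state $\sigma_{i_1}$ with $\sigma_{i_1}\models_{\rho}s_1$, exactly the condition for $\sigma_{i_1}$ to start a match. The event part is where the construction must be justified: an event $@^+(x\ge a)$ is encoded by the PORV $x=a$ together with $flow^x_q>0$ (and dually $@^+(x\le a)$ by $x=a$ and $flow^x_q<0$). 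Under the standing assumption that every location has monotone variable dynamics, I would show this conjunction identifies exactly the positive edges demanded by clause~\ref{posedge} of Definition~\ref{def:eventMatch}: monotonicity of $x$ in location $q$ forces $x=a$ with $flow^x_q>0$ to occur at a single instant at which $x\ge a$ switches from false to true, supplying the required $T_{past}$ witness. The inductive step is identical except that $G(\mu_k)=s_k\wedge(t\in\tau_{k-1})$, so I must also verify that $t$, reset on entry to $q_k$, records the elapsed time $t_{i_{k-1}}+\dots+t_{i_k-1}$ between consecutive matches; the edge then fires iff $s_k$ matches \emph{and} this elapsed time lies in $\tau_{k-1}$, which is precisely the $k$-th clause of Definition~\ref{def:seqExprMatch}.

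The remaining ingredient is the faithful realization of the assignment lists. When $|\mathcal{A}_i|=1$ the single reset sits on $(q_i,\mu_i,q_{i+1})$ and the argument is immediate. When $|\mathcal{A}_i|>1$ the edge is replaced (Figure~\ref{fig:orderedReset}) by a chain through pause locations $q_{i_{R_1}},q_{i_{R_2}},\dots$, each carrying one reset on an otherwise $true$-guarded edge. Because pause locations belong to $Z$ and freeze all variable evolution, no continuous time elapses and no PORV or event truth value changes while the chain is traversed; hence the guard and timing checks already verified at the entry to the chain remain valid, and the net effect of the chain is to apply the resets in order, yielding exactly the ordered valuation of the local variables intended by Definition~\ref{def:featureAutomaton}. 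Carrying the invariant to $k=n$ places $M_F$ in $q_{n+1}$ with all local variables holding their match values; traversing $(q_{n+1},\mu_F,q_{\mathcal F})$ evaluates $\mathcal{F}$ on these values, which is by definition $Eval(\mathcal{M},\rho,\mathcal{F})$. This gives A (a match drives $M_F$ to $q_{\mathcal F}$ with valuation $Eval(\mathcal{M},\rho,\mathcal{F})$) and, reading the invariant right-to-left, B (an accepting execution reconstructs the witnessing states $\sigma_{i_1},\dots,\sigma_{i_n}$ of a match with the same valuation).

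I expect the main obstacle to be the event clause rather than the bookkeeping: one must argue rigorously that the purely state-based guard $x=a\wedge flow^x_q>0$ coincides with the edge-sensitive semantics of Definition~\ref{def:eventMatch}, which quantifies over a past (or future) sub-interval of the run, and that this coincidence genuinely requires the monotonicity hypothesis (a non-monotone $x$ could touch $a$ from above or oscillate, producing spurious or missed edges). A secondary subtlety is nondeterminism: a single run may admit many matches with different witness states, so the monitor must be read as nondeterministically guessing each $\sigma_{i_j}$, and correctness means the set of accepting executions is in bijection with the set of matches; the stutter transitions of $\mathcal{H}$ together with the $true$-guarded pause edges are what let the monitor idle in $q_k$ until the correct witness appears. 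Both points are local to individual edges, so once they are settled the induction closes routinely.
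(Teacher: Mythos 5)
Your proposal is correct and follows essentially the same route as the paper's proof: an induction over the sub-expressions showing that $M_F$ reaches location $q_{k+1}$ exactly when the prefix $S_1^k$ has matched, with the final unguarded edge to $q_{\mathcal F}$ evaluating $\mathcal F$; the paper proves Part A by this forward induction and Part B by a separate (terser) argument, whereas you fold both into one biconditional invariant. Your added care about the event encoding $x=a\wedge flow^x_q>0$ versus Definition~\ref{def:eventMatch}, the pause-location reset chains, and the nondeterministic choice of witness states goes beyond what the paper's proof makes explicit, but does not change the underlying argument.
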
	
	{\setlength{\topsep}{1pt}
		\begin{proof}
			We prove the theorem in two parts as follows:
			
			{\em Part A:} Let $\rho: ~~ \sigma_0 \mapsto^{t_0}_{f_0} \cdotp \cdotp \cdotp \mapsto^{t_{i_1-1}}_{f_{i_1-1}} \sigma_{i_1} \mapsto^{t_{i_1}}_{f_{i_1}} \cdotp \cdotp \cdotp \mapsto^{t_{i_2-1}}_{f_{i_2-1}} \sigma_{i_2} \mapsto^{t_{i_2}}_{f_{i_2}} \cdotp \cdotp \cdotp $ be a run of $\mathcal{H}$ that matches the sequence expression $S=s_1 $ \#\#$\tau_1 ~~ s_2$ \#\# $\tau_2  ~~ ... $ \#\#$\tau_{n-1} ~~ s_n$, with match $\mathcal{M} = \langle {i_1}, \ldots, {i_n} \rangle$. Let $M_F$ be the feature automaton constructed for feature $F$.
			
			The initial location of $M_F$ is $q_{1}$. In the prefix $\sigma_0 \mapsto^{t_0}_{f_0} \cdotp \cdotp \cdotp \mapsto^{t_{i_1-1}}_{f_{i_1-1}} \sigma_{i_1}$ of $\rho$, $\sigma_{i_1} \vDash_\rho s_1$ and G($\mu_1$) = $s_1$, hence the state $q_{2}$ is reachable in the feature automaton with state $\sigma_{i_1}$ of $\rho$, with associated assignments to local variable made along the transition. For configuration $\langle q_{{j}},\sigma_{i_j}\rangle$ reachable in $M_F$, $\langle q_{{j+1}},\sigma_{i_{j+1}}\rangle$ is reachable for all $ 1\leq j\leq n$. Given that $\langle q_{{j}},\sigma_{i_j}\rangle$ is reachable, $(q_{{j}},\mu_j,q_{{j+1}})$ is an edge in $M_F$, and $\mathcal{M}$ is a match, we have $\sigma_{i_j} \vDash_{\rho} s_{j}$ and $G(\mu_{j}) \equiv s_{j}$ $\wedge$ $(t_{i_{j-1}} + ... + t_{i_j - 1}) \in \tau_{j}$, $\langle q_{{j+1}},\sigma_{i_j}\rangle$ is reachable, via one or more transitions through pause states. Now, since $\sigma_{i_j} \mapsto^{t_{i_j}}_{f_{i_j}} \cdotp \cdotp \cdotp \mapsto^{t_{i_{j+1}-1}}_{f_{i_{j+1}-1}} \sigma_{i_{j+1}}$, configuration $\langle q_{j+1}, \sigma_{i_{j+1}}\rangle$ is also reachable. Inductively, when configuration $\langle q_{{n}},\sigma_{i_{n}}\rangle$ is reached, $\langle q_{{n+1}},\sigma_{i_{n}}\rangle$ is also reachable. 	Since $G(\mu_{n+1}) = true$, $\langle q_{\mathcal{F}},\sigma_{i_{n}}\rangle$ is reachable. Along each transition, resets corresponding to local variable assignments 
			appropriately update the values of the local variables which form part of the feature automaton state. The feature
			expression is computed on the transition to location $q_{\mathcal{F}}$.
			
			{\em Part B:} The run $\rho$ of $\mathcal{H}$ is accepted by $M_F$. Therefore, $\exists ~\sigma_{i_1},$ $\sigma_{i_2},$ $...,$ $\sigma_{i_n}$ in $\rho$ such that $i_j \leq i_{j+1}$ and  $\sigma_{i_1} \vDash_{\rho} s_1$, $\sigma_{i_2} \vDash_{\rho} s_2 ~ \wedge ~ t_{i_1} + ... + t_{i_2 - 1} \in \tau_1$, and so on ... until, $\sigma_{i_n} \vDash_{\rho} s_n ~ \wedge ~  t_{i_{n-1}} + ... + t_{i_n - 1} \in \tau_n$. The feature valuation computed over state $\sigma_{i_n}$, on acceptance, is $\gamma$. By Definition~\ref{def:seqExprMatch}, $\mathcal{M} = \langle {i_1},{i_2},...,{i_n} \rangle$ is a match of $\mathcal{F}$ with valuation $Eval(\mathcal{M},\rho,\mathcal{F}) = \gamma$ in the feature range of Definition~\ref{def:featureComputation}.		
	\end{proof}}

	\subsection{Product Automaton Construction}\label{sec:product}
	The product construction of the {HA} $\mathcal{H}$ and the FA $M_F$ yields a special type of automaton. In the classical product construction, non-determinism present in the component automata carries over to the product. The more traditional product construction, defined in~\cite{Costa16}, is conservative and reports unintentional matches, while at times missing matches that were otherwise intended. It is thus not complete. In this article a non-standard product is defined that has clearer semantics for matches than that described in earlier work. To accomplish this, we introduce the notion of first-match region semantics. 
	
	We explain this with an example. Consider the designer's intention to specify the pattern, {\em "P1 is true and thereafter P2 is true"}, and measure the time delay between the two. This translates to the sequence-expression ``{\tt P1,t1:=\$time \#\#[0:\$] P2,t2:=\$time}'', where {\tt P1} and {\tt P2} are PORVs over analog signals {\tt x} and {\tt y} respectively, and the feature computation is {\tt (t2-t1)}. The truth intervals of the PORVs are shown as $r_1$, $r_2$, $r_3$, $r_4$ and $r_5$ in Figure~\ref{fig:fm_wave}. With the semantics of~\cite{Costa16}, the maximum feature value would be $\Delta_1$ (matching points in $r_1$ with points in $r_5$). But the intent is to match points where {\tt P2} is true immediately subsequent to points where {\tt P1} is true, giving a maximum feature value of $\Delta_2$. This cannot be captured by the semantics of ~\cite{Costa16}. First match region semantics matches $r_1$ only with $r_2$, giving a maximum feature value of $\Delta_2$. Region $r_4$ matches with $r_5$. Region $r_3$ doesn't contribute to any match.

	\begin{figure}[t]
		\centering
		\includegraphics[scale=0.9]{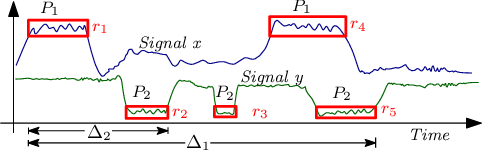}
		\caption{An illustration of first-match region semantics.}\label{fig:fm_wave}
	\end{figure}

	\begin{definition}{\bf First-match Region Semantics}\label{def: firstMatch}
		
		Given a sequence expression, $S$$=$$s_1${\tt \#\#}$\tau_1 ~ s_2${\tt \#\#}$\tau_2 ...${\tt \#\#} $\tau_{n-1}~s_n$, and $\mathbb{M}=\{$$\langle {1_1},$${1_2},$$\ldots,$${1_n} \rangle,$$\langle {2_1}, {2_2},$$\ldots, {2_n} \rangle$ $ \ldots \}$, the set of all matches of $S$ in \textit{run} $\rho: ~~ \sigma_0 \mapsto^{t_0}_{f_0} \sigma_1 \mapsto^{t_1}_{f_1} \sigma_2 \mapsto^{t_2}_{f_2} \cdotp \cdotp \cdotp $ of the hybrid system $\mathcal{H}$, $ \langle {i_1}, {i_2}, \ldots, {i_n} \rangle \in {\mathbb M}$ follows first match region semantics iff:\\  
		\hspace*{0.5cm}$\forall_{j=2}^{n} \exists_{\theta_l \in [0,T_{i_j}]}~ \exists_{\theta_r \in [T_{i_j},\infty)}$ \\
		\hspace*{1cm}$(\forall_{t<\theta_l} \langle {i_1},\ldots, {k} \rangle$ is not a match for $S_{1}^{j}, T_k = t)$ and \\
		\hspace*{1cm}$(\forall_{\theta_l\leq t'\leq \theta_r} \langle {i_1},\ldots, {k'} \rangle$ is a match for $S_{1}^{j}, T_{k'} = t')$.
		
		where, $T_m = \Sigma_{z=0}^{m-1} t_z$.
		~\hfill$\Box$
	\end{definition}
	The product definition presented here ensures that only runs following first match semantics reach the final location in $M_F$. The exclusion of other runs that would have matched in a traditional product is intentional, and imposed to accurately embody first match semantics for feature computation in FIA in the generated product. Additionally, the FA doesn't follow the traditional structure of an observer automaton for verification. These reasons taken together motivate the need for a non-standard product construction.
	
	We denote the valuation of variables in the state $\sigma$ in a run $\rho$ as $\eta(\sigma)$, and the valuation of the variable $v$ in the state $\sigma$ as $\eta(\sigma[v])$. 
	{
		\begin{definition}\label{def:LSHA}
			\textbf{Level Sequenced Hybrid Automaton (LSHA) - $M_F \bowtie \mathcal{H}$ :}\label{def:featureHybridSynch}
			The product of feature automaton $M_F = (Q_S,$$Z,$$X_H,$$V,$$C,$$E,$ $Init_S, q_{\mathcal{F}})$
			and {HA} $\mathcal{H}$ = $(Q_H,$ $X_H,$ $Lab_H,$ $Init_H,$  $Dom_H$, $Edg_H$, $Act_H)$, is defined as 
			the {HA} $\mathcal{H}_F$ = $(Q_F,$ $X_F,$ $Lab_F,$ $Init_F,$ $Dom_F,$ $Edg_F,$ $Act_F)$ where,
			\begin{itemize}[topsep=0em]\setlength\itemsep{0em}
				\item \emph{$Q_F = Q_H \cup \{q_{\mathcal{F}}\} \cup \{Z \times Q_H\} $};
				
				\item \emph{$X_F = X_H \cup {V \cup C \cup \{level\}}$};
				
				\item \emph{$Lab_F = Lab_H$, is the finite set of \textbf{synchronization labels};}
				
				\item \emph{$Init_F = Init_H \times Init_S \times \{level==0\}$};
				
				\item \begin{tabular}[t]{@{\hspace{-1pt}}l c l }
					$Dom_F(l)$ 
					& = & $Dom_H(l) \times \mathbb{R}^{|V \cup C|+1}$ if $l \in Q_H$,\\
					& = & $\mathbb{R}^{|X_F|}$ if $ l \in { \{Z \times Q_H\}} \cup \{q_{\mathcal{F}}\}$
				\end{tabular}
				
				\item \emph{$Edg_F \subseteq Q_F \times Lab_F \times \mu_{\mathcal{V}(X_F) \times \mathcal{V}(X_F)} \times Q_F $ is defined by the following {rules}, \emph{where $l \in Q_H$ and $q_i,q_{i'} \in Q_S/Z$, $\mu_H$ and $\mu_i$ are the transition relations $\mu_H \subseteq \mathcal{V}(X_H) \times \mathcal{V}(X_H)$ and $\mu_i \subseteq \mathcal{V}(Q_H\cup X_H \cup V \cup C) \times \mathcal{V}(Q_H\cup X_H \cup V \cup C)$, with $\omega_i$ as defined in Definition~\ref{def:featureAutomaton}. 
				The relation $l\vDash \mu_i$, to be read $\mu_i$ is applicable in $l$, is true iff $\exists_{C^i_j \in s_i}  \Upsilon^i_j == l$; and for edge $e=(l,a,\mu_H,l')$, $e \vDash \mu_i$ iff for $s_i = D_i ~\wedge ~E_i$ either $E_i \equiv @^-(state == l)$ or $E_i \equiv @^+(state == l')$.
					}:}
				
				{\small
					\[
					{	
						{l} \xhookrightarrow[{\mu_H}]{a} l'
					} 
					\over 
					{
						l 
						\xhookrightarrow[\mu_H \times \mu_{ID_{\{V\cup C\}}} \times \{ (level == 0) \wedge (level':=level)\}]{a}
						l'
					} \tag{13.1} \label{eq:13.1}
					\]
					\[
					{	
						{l \xhookrightarrow[\mu_H]{a} l'} \bigwedge~
						{q_i \xhookrightarrow[\mu_i]{} q_{i'}} \bigwedge ~
						l  \nvDash \mu_i
					} 
					\over 
					{
						l 
						\xhookrightarrow[\mu_H \times (\mu_{ID_{\{level\}\cup V\cup C}}\cap \{level==i-1\})]{a} 
						l'
					}
					\tag{13.2} \label{eq:13.2-1}
					\]
					\[ 
					{
						q_i \xhookrightarrow[\mu_i]{} q_{i'}  ~ \bigwedge ~ 
						l \vDash \mu_i 
					} 
					\over 
					{
						l \xhookrightarrow[\mu_{ID_{X_H}} \times \omega_i \times  \{(level==i-1) \wedge (level':=i)\} ]{} l
					}
					\tag{13.3} \label{eq:13.3}
					\]
					\[
					{
						q_i \xhookrightarrow[\mu_{i}]{} q_{i,1}  ~ \bigwedge ~ 
						(l \vDash \mu_i \vee  e\vDash \mu_i)~  \bigwedge ~ 
						q_{i,1} \in Z 
					} 
					\over 
					{
						l \xhookrightarrow[\mu_{ID_{X_H}} \times \omega_i \times  \{(level==i-1) \wedge (level':=i)\}]{} (q_{i,1},l)
					}
					\tag{13.4} \label{eq:13.4}
					\]
					\[ 
					{
						q_{i,j} \xhookrightarrow[\mu_{i,j}]{} q_{i,j+1}  ~ \bigwedge ~
						q_{i,j}, q_{i,j+1} \in Z 
					} 
					\over 
					{
						(q_{i,j},l)
						\xhookrightarrow[\mu_{ID_{X_H \cup \{level\}}} \times \omega_{i,j}]{} (q_{i,j+1},l)
					}
					\tag{13.5} \label{eq:13.5}
					\]
					\[ 
					{
						q_{i,j} \xhookrightarrow{\mu_{i,j}} q_{i'}  ~ \bigwedge ~
						q_{i,j} \in Z ~  \bigwedge ~ q_{i'} \notin Z \bigwedge e\nvDash \mu_{i}
					} 
					\over 
					{
						(q_{i,j},l)
						\xhookrightarrow[\mu_{ID_{X_H \cup \{level\}}} \times \omega_{i,j}]{} l
					}
					\tag{13.6} \label{eq:13.6}
					\]
					\[ 
					{
						q_{i,j} \xhookrightarrow{\mu_{i,j}} q_{i'}  ~ \bigwedge ~
						q_{i,j} \in Z ~  \bigwedge ~ q_{i'} \notin Z \bigwedge e\vDash \mu_i
					} 
					\over 
					{
						(q_{i,j},l)
						\xhookrightarrow[\mu_{ID_{X_H \cup \{level\}}} \times \omega_{i,j}]{} l'
					}
					\tag{13.7} \label{eq:13.7}
					\]
					
				}
				
				\item \emph{The function $Act_F$ assigns a set of activities to each location. The expression associated with the flow ${flow_F}^q_x : \mathbb{R}^{\geq0}\rightarrow \mathbb{R}^n$ for each $x \in X_F$ in location $l \in Q_F$ is defined as follows:
					\begin{mycenter}[0.1em]
						\begin{tabular}{l c l }
							${flow_F}^l_x$ 
							& = & 0 for each $x \in X_F$ if $l \in Z \times Q_H$,\\
							& = & 0 for each $l \in Q_F$ if $x \in V$,\\						
							& = & 1 for each $l \in Q_F$ if $x \in C$,\\
							& = & ${flow_H}^l_x$ if $l \in Q_H \bigwedge x\in X_H$
						\end{tabular}
					\end{mycenter}
				}
			\end{itemize}        
			{\em In order to enforce first match semantics, for each $q_i \xhookrightarrow[\mu_i]{} q_{i'} \bigwedge q_H  \vDash \mu_i$, $q_i, q_{i'} \in Q_S$, $i>1$, $q_H,q_H' \in Q_H$, at level $i-1$, $q_H$ is replaced in $Q_F$ according to the transformation in Figure~\ref{fig:splitting}. Herein, $q_{H_i}$ is identical to $q_H$ and differs only in the invariant as shown. $\overline{G}(\mu_i)$ is the closure of the compliment of conditions satisfying $s_i$.~\hfill$\Box$ 
				
			}
		
		\end{definition}}
	
	\begin{figure}[t]
	\centering
	\includegraphics[scale=0.8]{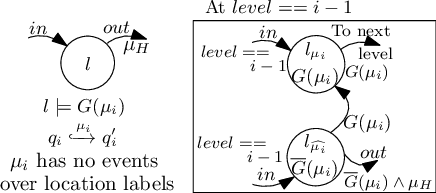}
	\caption{Splitting for location $l \in Q_H$ when $l \vDash G(\mu_i)$, $in$ and $out$ represent incoming and outgoing transitions of $q_H$.}
	\label{fig:splitting}
	\end{figure}
	
	The behaviour asserted by the feature sequence-expression is built into the product automaton called a {\em level-sequenced hybrid automaton} (LSHA). In the LSHA, the level is a syntactic structure derived from the sequence-expression. The value of the variable {\em level} indicates how much of the sequence expression has matched. The variable $level$ is set to $i$ when sub-expression $i$ matches. Transitions from one level to another assign an appropriate value to $level$ indicative of the subscript of the sub-expression matched, while also executing assignments to local variables associated with the match. Initially, when $level$ is 0, corresponding to location $q_{1}$ of $M_F$, the automaton waits for a match of $s_1$. When $s_1$ matches, $level$ is non-deterministically incremented. Due to first-match semantics, a constrained form of non-determinism is applicable when $level>0$. The non-determinism in the control allows computation of a continuum of matches. When the feature has matched (ending with the match of the last sub-sequence),  the feature is computed and control moves to the final location of $M_F$.
	
	\begin{figure}[t]
	\centering
	\includegraphics[scale=0.8]{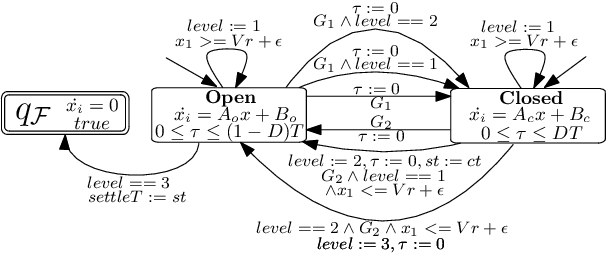}
	\caption{LSHA for feature {\tt Settle Time}. $G_1$ and $G_2$ are transition guards between the locations as in Figure~\ref{fig:ha:buck}.}\label{fig:prductAutSettleTime}
	\end{figure}    
	
	The use of variable $level$ in $\mathcal{H}_{F}$ allows us to avoid the typical blow-up that results from the standard product construction used in~\cite{Costa16}. Given a HA with $N$ locations and a feature F with $K$ sub-expressions, ignoring pause locations, the product automaton in~\cite{Costa16} has $N\times(K+1)$ locations, while the one here has {\em atmost} $N\times (1+2\times(K-1))$ locations. Note that in~\cite{Costa16} first-match semantics isn't used. Also, the number of locations in the LSHA  here would reduce to $N+K+1$ if sub-expressions contain events over location labels, and further to $N+1$ if restrictions on monotonicity are lifted. We observe that, although the theoretical worst case bound for the product defined here is worse than that of a traditional product, in practice the use of leveling enables faster analysis of features. For instance without the variable $level$ the vanilla product from~\cite{Costa16} (where location copies are made for each level) when analyzed by SpaceEx, under equivalent analysis settings, takes 1m:30s and twice the memory with 7 locations as opposed to the 20s with 3 locations and the variable $level$ for the {\tt Settle Time} feature in Example~\ref{ex:settlingTime}.  
	The LSHA for {\tt Settle Time} is shown in Figure~\ref{fig:prductAutSettleTime}.
	
	\subsection{Feature Evaluation}
	The product automaton is, by construction, a {HA}. We compute all values of $\mathcal{F}_{name}$ reachable in location $q_\mathcal{F}$ of $\mathcal{H}_F = M_F \bowtie \mathcal{H}$. Using reachset computation tools on $\mathcal{H}_F$, a projection of the entire reachset $\mathcal{R}$ in location $q_\mathcal{F}$ on $\mathcal{F}_{name}$ gives us an overapproximation of the reachable range of values for $\mathcal{F}_{name}$. A reachset computation tool may computes $\mathcal{R}$ in various ways. In this article we use tool SpaceEx. The feature range  $[\mathcal{F}_{min},\mathcal{F}_{max}]$ is computed as follows:
	
	\begin{mycenter}[0.1em]
	$\mathcal{F}_{min} = \min\limits_{\forall \sigma \in R}\eta(\sigma[\mathcal{F}_{name}])$;\\
	$\mathcal{F}_{max} = \max\limits_{\forall \sigma \in R}\eta(\sigma[\mathcal{F}_{name}])$;	
	\end{mycenter}
	where $\eta(\sigma[\mathcal{F}_{name}])$ is the valuation of $\mathcal{F}_{name}$ in state $\sigma$.	
	
	\section{Method-2: Feature Analysis of Corner Cases}\label{sec:SMT}
	Methodology-1 demonstrates how reachability analysis tools are used to compute estimates on the range of feature values. However, these tools do not show how extremal values can be reached. Additionally, due to over-approximation errors, corners of the feature interval generated by these tools may not be realizable. SMT solvers can generate reachability proofs for a reachable goal state. For a feature, this means that a proof of how feature value $\hat{f}$ is realized is constructed in terms of a concrete trace for which the evaluating the feature yields $\hat{f}$.
	SMT solvers for reals~\cite{GaoKC13} use decision procedures that use overapproximation techniques. The analysis is bounded in the values of SMT variables and in the number of discrete automaton transitions (a hop bound). Hence the outcome is an overapproximation of a bounded reachability question. If realistic and sufficiently large bounds are used, the boundedness is acceptable, since for most realistic systems the domains of variables in the {HA} model are bounded. To improve reliance on the results obtained, bounds used must come from knowledge of the design and results must be interpreted in terms of these bounds. Due to overapproximations, the proof of reachability for a goal in SMT could be fictitious, nevertheless it provides insight to build simulations to verify the reported scenario. In practice we find that overapproximations produced by reachset computation tools like SpaceEx are larger than that produced when using SMT solvers.
	
	It is important to note that application of Methodology 1 in Section~\ref{sec:featureTransformations} reduces the problem of feature analysis to a reachability question. The generality of the algorithm allows it to be used with a variety of reachability analysis tools.	SMT solvers, by nature, are less inclined to perform flow-pipe analysis (which generates an overapproximation of the state-space) and more inclined towards finding a {\em single} run that satisfies a goal constraint. This therefore becomes a challenge when we relate the notion of identifying the interval of values a feature can take for the {HA}, when using SMT solvers. We answer the following questions:
	\begin{enumerate}
	\item How would a reachability question for computing the feature interval be posed as an SMT solver goal?
	\item Any such goal will only yield a single feature value, and not a range. How would one then compute the extreme feature values?
	\end{enumerate}
	
	Once the range of feature values is identified using SMT, the SMT solver can  provide a satisfying trace, thereby solving the input selection problem for analyzing corner cases for features. 
	
	A summary of the methodology used to compute the feature range using SMT is shown in Figure~\ref{fig:featureSearch}.
	The feature to be analyzed is expressed in FIA. The {HA} model along with the feature is taken through the transformation step (Methodology-1). The tuned model is an implicit representation of all legal executions of the automaton, biased toward computing the feature attribute. 
	
	The SMT question about feature value $f$ is as follows:
	\begin{mycenter}[0.1em]
	{\em  \fontsize{9.9}{10} \selectfont Is there a run of the automaton that results in feature value f?}
	\end{mycenter}
	
	On the other hand, the Feature Range Analysis must answer the question:
	\begin{mycenter}[0.1em]
	{\em What is the range of feature attribute values for $\mathcal{H}$?}
	\end{mycenter}

	\begin{figure}[t!]
		\centering
		\includegraphics[width=8.5cm]{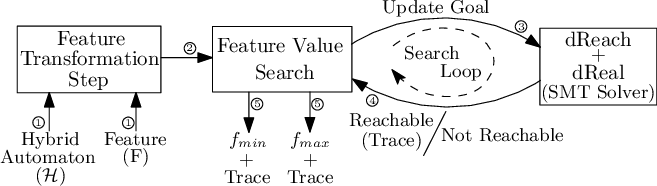}
		\caption{Outline of the Feature Value Search using SMT} \label{fig:featureSearch}
	\end{figure}
	
	To bridge this gap, we use a two part reduction described in Sections~\ref{sec:smt_modelling} and ~\ref{sec:smt_explore}.
	\subsection{SMT based modelling of the Hybrid Automaton Dynamics}\label{sec:smt_modelling}
	The {HA}, along with the various model constraints such as locations, their dynamics and invariants, transitions between locations and transition guards and resets, are modeled as clauses in SMT. We use the translator dReach~\cite{KongGCC15}, which in turn uses SMT solver dReal~\cite{GaoKC13} for modeling and analyzing hybrid behaviours over reals. dReal internally maintains the coupling between HA variables during its computation steps using these constraints. We now discuss the caveats of the decision outcomes presented by dReach/dReal. dReal solves the {\em $\delta$-decision problem}, to decide if a given formula is false or $\delta$-true (dually, whether it is true or $\delta$-false). An SMT formula is $\delta$-true if it remains true under $\delta$-bounded numerical perturbations to atomic clauses in the formula~\cite{GaoKC13}. For a feature, this means that a feature goal is reachable under $\delta$-bounded numerical perturbations to the goal, and sentences describing the system~\cite{KongGCC15}. Since realistic hybrid systems interact with the physical world, it is impossible to avoid slight perturbations. Hence, this is a very useful result as it gives feature values that are reachable under reasonable choices for $\delta$~\cite{GaoKC13}. The $\delta$-decision problem has been shown to be decidable for first order sentences over bounded reals with arbitrary Type 2 computable functions (real functions that can be approximated numerically, such a polynomials, trignometric functions, and Lipchitz-continuous ODEs). dReal guarantees the result of  unsatisfiability of a goal $G$ over $K$ transitions (hops) with $\delta$ perturbations on sentences describing the system.
	
	The model is {\em unrolled} in terms of number of transitions, upto the given bound $K$. For instance, if our goal were to reach location $q_\mathcal{F}$ in the location graph of Figure~\ref{fig:prductAutSettleTime}, a minimum of five transitions would be required, resulting from an unrolling of the model six times starting with location {\em Closed}, to reach location $q_\mathcal{F}$, reachable only when $level$ is 3, as shown below:
	
	\begin{mycenter}[0.1em]
	\includegraphics[scale=1]{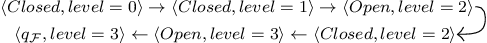}
	\end{mycenter}
	The encoding of a {HA} as SMT clauses for dReal can be found in Ref.~\cite{KongGCC15}. 
	
	\subsection{Feature Range Exploration}\label{sec:smt_explore}
	To compute the extremal feature values and their corresponding traces, search techniques are used to explore the feasible set of feature values and progressively refine the corners of the feature range. SMT solvers take a {\em goal} statement, and a hop bound as input and respond indicating whether or not the goal is reachable. The feature range computed on a hop-bounded SMT-based search is therefore an under-approximation of what is obtained using an infinite trace length. An increase in the bounds can result in a larger feature range. A low value of the hop bound $K$ can yield a severely under-approximated feature range (ignoring feature values reachable via transition paths longer than $K$). In Section~\ref{sec:expts} we discuss a heuristic for choosing a value for $K$. Using an appropriately large $K$ ensures that the computed feature range safely over-approximates the reachable interval of feature values.
	
	To begin, no feature value $\mathcal{F}$ is initially known. If a behaviour contributing to the feature exists, its feature value will either be positive (including zero), or negative. Therefore, initially the search uses goals $\mathcal{F} < 0$ and $\mathcal{F}\geq 0$ as pivots about which to begin. The algorithm pushes a pivot as far as possible in each direction to find the corners of the feature range. However, since it is not known how {\em far} to push the pivot, a combination of exponential expansion and bisection is used to identify the corners of the feature range.
	
	\begin{figure}[t!]
	\centering
	\includegraphics[scale=0.9]{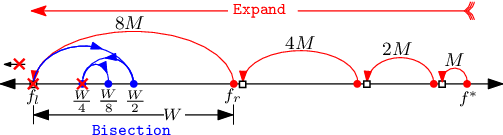}
	\caption{Using {\tt Expand-Bisect} to compute the left corner}\label{fig:ETS-moves}
	\end{figure}
	
	For pivot value {$f^*$}, a goal that trivially checks if $\mathcal{F} < f^*$ may return a feature value very close to $f^*$ resulting in repeatedly finding values of $f^*$ that are within close proximity of each other. Therefore the algorithm explores the feature space in steps of size $2^{i-1} \times M$, where $i$ is the $i^{th}$ expansion step. $M$ is a search parameter chosen by the designer. For instance, in the computation of the settling time for a buck regulator, $M = 10^{-6}$, because the feature value is in the order of $\mu$s. Expansion tries to find a feature value further than the current pivot $f^*$ by an amount $M$, and pushes the new value further by a recursive call to itself on the new pivot with a step size of $2\times M$. When no new pivot is found, a bounded interval search ensues to find a feature value within a distance of $M$ from the last known pivot $f^*$. The search refines the interval tapering it towards the corner of the feature range. Figure~\ref{fig:ETS-moves} depicts this strategy, starting at a pivot $f^*$ with boxes indicating steps during the expansion, red circles are new pivots, and red crosses indicate no feature value found. The bisection search terminates when the interval containing the feature corner has a width of $\epsilon$ (the error tolerance) or less.
	
	A bounded interval search for the left corner is now described. For the interval $W=[f_l,f_r]$, the midpoint $mid$ is used as a pivot. The algorithm looks for a feature value, $f^*$, in the leftmost interval $[f_l,mid]$. If found, it proceeds to search the interval $[f_l, f^*-\epsilon]$. The shrinking of the right interval boundary by $\epsilon$ is consistent with the precision requirements of the algorithm and excludes the already found feature value $f^*$ from the search. If no feature value is found in $[f_l,mid]$, then the search moves to $[mid,f_r]$. If a feature value is found in $[mid,f_r]$, then it forms the basis of a new pivot and the algorithm is recursively called on the interval $[mid, f^*-\epsilon]$. Every interval search uses the last feature value found as the pivot and excludes it from future searches. If no new feature value is reachable about the last pivot, then the last known reachable feature value and its corresponding trace (returned by the SMT solver as evidence of a satisfiable instance) are returned. The steps of exploration for the right corner mirror those for the left corner.
	
	Testing if a feature value $f^*$ is reachable involves invoking the SMT solver to answer the question specified as the goal $G$ in the context of the model $\mathcal{H}_F$. A `YES' answer of the solver returns a trace, ensuring that the dynamics, invariants and guard conditions of $\mathcal{H}_F$ are not violated. A `NO' answer, returns an {\tt EMPTY} indicating that no behaviour of $\mathcal{H}_F$ yields a feature value specified in G.
	
	
	If an interval for the feature is known, via a primary reachability analysis using a coarse overapproximation in a tool like SpaceEx, then the bounded bisection algorithm can be applied to search for the feature corners within the known estimate of the feature range.
	
	
	\section{Case Studies and Experimental Results}\label{sec:expts}
	This section discusses the various models we have used in our analysis, and data from our results, comparing the feature analysis of {HA} using the reachability analysis tool SpaceEx versus using the SMT tool dReach.
	
	We compare the results of analyzing the following models and features:
	\begin{enumerate}
	\item \textbf{Battery Charger~\cite{Costa16}:} Time for the battery to charge to its rated voltage; Time for the battery to restore charge while in its maintenance mode.
	\item \textbf{Buck Regulator~\cite{buck}:} Time for the output of the Buck regulator to settle; Peak voltage overshoot of the voltage response curve for the regulator.
	\item \textbf{Nuclear Reactor Temperature Control~\cite{alur96-2}:} Unsafe Operating Temperature of the reactor.
	\item \textbf{Adaptive Cruise Control~\cite{hybridBenchmarks}:} Time to capture cruise speed from a specific velocity; Time to capture cruise speed while in any velocity within a range of velocities.
	\end{enumerate}
	
	
	\begin{figure}[t!]
	\centering
	\includegraphics[scale=0.8]{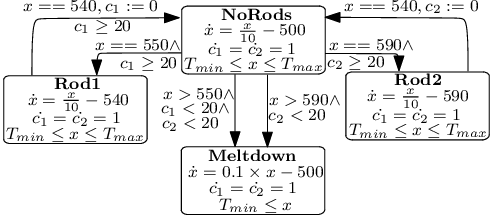}
	\caption{Temperature Control of an Atomic Reactor}\label{fig:nuclear}
	\end{figure}
	
	Here, we pay special attention to the condition of meltdown for an atomic reactor cooling strategy. The temperature control strategy for a nuclear reactor~\cite{alur96-2} is designed to insert a cooling rod into the reactor with the aim of maintaining the temperature of the reactor below the threshold for meltdown and above the threshold for sustaining the nuclear reaction. Mechanical constraints prevent both rods from being inserted simultaneously, and requires each rod to be given a resting period of 20 time units before re-insertion. An adaptation of the HA of~\cite{alur96-2} is shown in Figure~\ref{fig:nuclear}. The feature for analyzing the condition of meltdown is expressed as follows:
	
	\begin{example}\label{ex:unsafe}
	\emph{\textit{Unsafe Operating Temperature:} Reactor temperatures that if reached can lead to reactor meltdown.}
	\end{example}
	{\scriptsize
		\begin{verbatim}
		feature unsafe();
		begin
		  var temp;
		  (c1<=20 && c2<=20 && x>=550), temp = x 
		  ##[0:$] (c2<=20 && x>=590) |-> unsafe = temp;
		end	\end{verbatim}
}

The condition of meltdown occurs when the reactor temperature, $x$, rises above the safe threshold, $T_{safe}$. The reactor can be in a state in which $x < T_{safe}$, but has crossed a point-of-no-return, i.e. neither control rod can be inserted, inevitably leading to a state of reactor meltdown. A safety property that checks for the safe operation of the reactor, with a traditional model checking approach, would only yield one of the many possible failures. However, it is of greater interest to identify the minimum temperature at which such a failure can occur. Knowledge of this corner enables a designer to design a suitable strategy for managing the rods. A feature analysis, unlike traditional model checking, yields these corner cases. Furthermore, for such a feature where only boundary events characterizing a failure are known, the second technology proposed (feature analysis with SMT) can provide the precise event sequence that yields a feature match, thereby filling the gap specified in the {\tt \#\#[0:\$]} construct. Figure~\ref{fig:unsafe} shows a corner case obtained using the methodology of Section~\ref{sec:SMT}, in which the red vertical line marks a point-of-no-return. Observe that from this point {\tt x} rises, passing through {\em safe} temperatures and beyond into the unsafe region of meltdown. In this scenario, both rod-timers are below their thresholds, preventing their insertion.

\begin{figure}[t]
	\centering
	\includegraphics[width=0.48\textwidth,height=4.8cm]{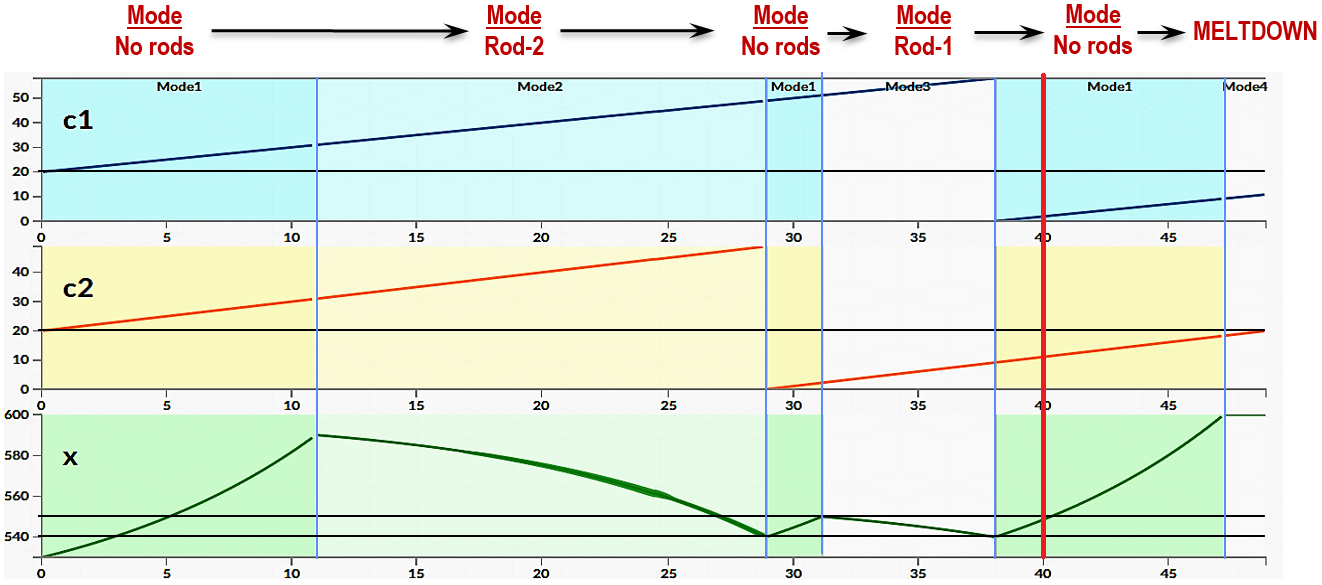}
	\caption{Unsafe Operation of a dual rod temperature control in an atomic reactor: an extreme value trace.}\label{fig:unsafe}
\end{figure}

Note that in Examples 1 and 2, both features used a single local variable, but were  able to express very interesting behaviours. However, in general more local variables may be required to express complex quantities over more intricate behaviours~\cite{dyFET}.

A feature based formal analysis was performed on the models and features outlined, the results of which are described in Table~\ref{table:featureRanges}. Table~\ref{table:Overshoot} compares the results of using various SpaceEx analysis parameters for analyzing the {\tt Overshoot} feature of a 7V Buck Regulator. We demonstrate the analysis of both strategies for feature analysis on four systems that cover both the AMS domain and the control domain. Both strategies have been implemented in a unified tool-flow. The tool is run on an Intel(R) Core(TM) 2 Duo CPU T6400 having two cores, each running at 2.00Ghz with 4GB of DDR2 RAM. For each system, the HA model and the features described are inputs to the tool. The tool then computes the LSHA for feature analysis. The feature range is then computed using the reachability tool SpaceEx (with the STC scenario and 8 template directions, octagons), and the SMT-based search using Methodology-2. 
\begin{table}[tb!]
	\begin{mycenter}[0.1em]
		\centering
		
		\scriptsize{
			{	\setlength\tabcolsep{1.5pt}
				\hspace*{-1em}\begin{tabular}{|c|c|c|c|c|c|c|}
					\hline
					{\bf Feature}  &\multicolumn{2}{c|}{\bf Size of Set}   & {\bf Algorithm} & {\bf CPU-Time}        &\multicolumn{2}{c|}{\bf Feature Range}    \\ \cline{2-3}\cline{5-7}
					{\bf Name   }  &{\bf $Q_F$}        & {\bf $X_F$}       &                 &{\bf (mins : secs)}	 &      {\bf Min} &        {\bf Max}	    \\ \hline

					\multicolumn{7}{|c|}{\cellcolor{lightgray}{\textbf{Test Case: Battery Charger}}} \\
					\hline
					
					Charge  &  9 & 7 & {\bf Reach} & 0m : 10s & 1hr 49min  &  3hr 15min \\ \cline{4-7}
					Time    &    &   & {\bf SMT} & 0m: 20s & 2hr 15min   & 2hr 31min \\ \hline
					
					Restoration &  9 & 7  & {\bf Reach} & 0m : 19s &  10in 12sec & 48min 58sec \\ \cline{4-7}
					Time        &    &    & {\bf SMT} & 0m : 15s & 16min 40sec & 18min 30sec \\ \hline 
					
					\multicolumn{7}{|c|}{\cellcolor{lightgray}{\bf Test Case: Cruise Control Model}} \\
					\hline
					Speed Capture  & 10 & 8 & {\bf Reach} & 0m : 18.4s & 37.23 sec & 48.43sec      \\ \cline{4-7}
					Precise (k=40) &  &  & {\bf SMT} & 1m : 15.2s & 41.18 sec & 43.68sec    \\ 
					\hline
					
					Speed Capture         &  10  &  8 & {\bf Reach} & 0m : 38.92s & 33sec & 48.43sec      \\ \cline{4-7}	
					Range, (k1=20, k2=40) &   &  & {\bf SMT} & 0m : 24.56s & 37.3sec  & 40.4sec    \\ \hline
					
					\multicolumn{7}{|c|}{\cellcolor{lightgray}{\bf Test Case: Nuclear Reactor Control}} \\
					\hline
					Unsafe Operation  &	8 & 6 & {\bf Reach} & 0m : 7s & 549.9$^\circ$ & 599.9$^\circ$ \\ \cline{4-7}
					Temperature	      &   &   & {\bf SMT} & 0m : 52s & 550$^\circ$ & 600$^\circ$ \\  
					\hline

					\multicolumn{7}{|c|}{\cellcolor{lightgray}{\bf Test Case: 5V Buck Regulator}} \\

					\hline
					
					Settle Time & 4 & 7 & {\bf Reach} & 0m : 16s & 94.17$\mu$s  & 124.167 $\mu$s\\ \cline{4-7}
					&   &   & {\bf SMT} & \multicolumn{3}{c|}{Out of Memory}\\ 
					\hline
					
					Overshoot&  4  & 7  & {\bf Reach}  & 0m : 03s & 5V  & 6.138V \\\cline{4-7} 
					&     &    & {\bf SMT} &  69m : 45s  & 5V  & 5.14V \\ 
					\hline

				\end{tabular}
				\caption{Results for Formal Feature analysis}	\label{table:featureRanges}
			} 
		}	
	\end{mycenter}

	\begin{mycenter}[0.1em]
		\centering
		
		\scriptsize{
			{	\setlength\tabcolsep{1.5pt}
				\hspace*{-1em}\begin{tabular}{|c|c|c|c|c|c|c|}
					\hline
					{\bf No. of Template}  &{\bf Flow-pipe}  & {\bf SpaceEx}     & {\bf Iterations} & {\bf CPU-Time}     &\multicolumn{2}{c|}{\bf Feature Range}\\ \cline{6-7}
					{\bf Directions }      &{\bf Tolerance}  & {\bf Algorithm}   &{\bf Taken}       &{\bf (mins : secs)} &     {\bf Min} &        {\bf Max}	    \\ \hline
					
						& 1 	&{\cellcolor{lightgray} STC} & {\cellcolor{lightgray}37} & {\cellcolor{lightgray}0.216999} & {\cellcolor{lightgray}1.76V} & {\cellcolor{lightgray}12.65V}\\\cline{3-7}
						&   	& LGG & 11 & 0.234999 & 4.41V & 9.96V \\\cline{2-7}
					
					4	& 0.1 	& {\cellcolor{lightgray}STC }& {\cellcolor{lightgray}36 }& {\cellcolor{lightgray}0.212999} & {\cellcolor{lightgray}5.14V} & {\cellcolor{lightgray}12.11V}\\\cline{3-7}
						&     	& LGG & 55 & 0.620999 & 6.63V & 9.90V \\\cline{2-7}
						
						& 0.01  & {\cellcolor{lightgray}STC} & {\cellcolor{lightgray}60 }& {\cellcolor{lightgray}0.517999} & {\cellcolor{lightgray}6.78V} & {\cellcolor{lightgray}9.12V}\\\cline{3-7}
						&   	& LGG & 57 & 0.814999 & 6.91V & 9.0V \\\hline
						
						& 1 	& {\cellcolor{lightgray}STC} & {\cellcolor{lightgray}85} & {\cellcolor{lightgray}4.52299} & {\cellcolor{lightgray}2.00V} & {\cellcolor{lightgray}12.04V}\\\cline{3-7}
						&   	& LGG & 11 & 1.51699 & 4.41V & 9.32V\\\cline{2-7}
						
					8	& 0.1 	& {\cellcolor{lightgray}STC} & {\cellcolor{lightgray}28} & {\cellcolor{lightgray}0.65899} & {\cellcolor{lightgray}5.17V} & {\cellcolor{lightgray}12.06V}\\\cline{3-7}
						&     	& LGG & 55 & 2.74399 & 6.63V & 9.19V \\\cline{2-7}
						
						& 0.01  & {\cellcolor{lightgray}STC} & {\cellcolor{lightgray}72} & {\cellcolor{lightgray}3.76499} & {\cellcolor{lightgray}6.82V} & {\cellcolor{lightgray}9.08V}\\\cline{3-7}
						&   	& LGG & 57 & 4.77099 & 6.92V & 9.0V \\\hline
				\end{tabular}
				\caption{SpaceEx analysis of {\tt Overshoot}: 7V Buck Regulator}	\label{table:Overshoot}
			} 
		}	
	\end{mycenter}
\end{table}

In Table~\ref{table:featureRanges}, the size of the transformed automata in terms of the number of locations (in set $Q_F$) and number of variables (in set $X_F$) are also shown. The column titled {\em Algorithm} indicates how the feature range was computed, with "Reach" indicating the use of Methodology-1 with SpaceEx as the compute engine, and "SMT" indicating the use of Methodology-2 to refine the range computed using the former. For each feature both corners of the feature range are reported along with the time taken to compute the range for each methodology. Both SpaceEx and the SMT analysis are bounded by introducing a global clock with an upperbound on time. The global clock is a variable that is part of the state of the LSHA. In SpaceEx, with the STC scenario, given the way SpaceEx does fix-point computation this variable must be bounded to achieve termination. Note that the need for the bound comes from the tool used and is not a limitation of this methodology. For the SMT analysis, in addition to a bound on the variables, we use a transition hop bound ($K$) of 15 transitions for all test cases except for the Buck Regulator, for which a bound of 50 transitions was used. In practice $K$ is incrementally increased until we are satisfied with the result. We use knowledge of the diameter of the LSHA graph and the number of subexpressions in the feature sequence expression to decide on a value for $K$. We reiterate that $K$ is a bound on the discrete transitions of the HA. Within a location a large number of clauses may be generated for the evolution of the HAs continuous variables. It is important to note that the SpaceEx tool computes the feature range in one sweep of the reach set; however, multiple iterations of the SMT tool (between 15 to 20 in our experiments) are involved in computing the feature range using "SMT". Additionally for the feature computing the {\em Unsafe Operation Temperatue}, the feature ranges produced by SpaceEx and the SMT tool show errors of 0.1. We attribute this to the precision of representation for floating-point numbers used by the tools.

Note that for the feature "Settle Time" of the Buck Regulator, the methodology using SMT exceeds memory bounds on our systems. We attribute this to the fact that the Buck Regulator frequently switches between locations of the automaton, with more than 50 transitions made within a very short span of time (time from the perspective of the Buck Regulator). The solver takes an inordinate amount of time to compute this. Due to the large number of transitions taken, the number of SMT clauses generated becomes too large for the solver to handle and leads to the solver running out of memory. We conclude that for systems having a high switching frequency, SpaceEx can be used with a resolution smaller that $10^{-6}$, for which it takes in the order of a few seconds to a few minutes to compute the feature range (depending on the chosen resolution).

For the models used here, it is shown that the feature range produced by the SMT solver is typically tighter than that obtained using SpaceEx. Both methodologies were employed using similar error tolerances. Note that the methodology using the SMT solver requires more CPU resources as indicated by a higher value in the column for {\em CPU-Time}. {The feature transformation methodology itself scales well with reachability tools and SMT. The time for analysis is dependent on the tools used. The tool SpaceEx has been used extensively for the analysis of {HA} and scales well for the models on which we have demonstrated the feature analysis approach. SpaceEx, in benchmarks has shown to be capable of handling systems with more than 100 variables~\cite{spaceex11}. The time and memory to compute a feature range grows exponentially with an increase in the hop bound when using SMT, and is attributed to a growth in the number of SMT clauses for larger hop bounds.} 

\section{Conclusion}\label{sec:conclusions}
Features help capture the designer's intent to quantify how the system behaves. A feature defines a real-valued evaluation function over a specific set of traces. By design they are more flexible than assertions (such as in STL) for specifying quantitative measures, at the cost of being more rigid in their expression of sets of traces (restricted to sequences of predicates and events). 



This article aims to assist designers in generating better designs, by automating the task of feature analysis and providing useful feedback of corner case behaviours using SMT. Features are automatically transformed into feature automata that are composed with the model and the composition is analyzed by off-the-shelf reachability solvers. The improved first-match semantics employed for features in this article more directly reflects the intent of designers and is incorporated into a new product automaton construction. Although the worst case bounds for the proposed product construction (using leveling) are worse than those of a more traditional product, in practice we see a 4x speedup and half the memory utilization during feature analysis with reachability solvers. We also provide an algorithm for computing ranges of feature values that uses SMT, which in practice produces tighter feature ranges. In some cases, typically associated with models having fewer location switches, the SMT-based algorithm also yields results faster or in time comparable to SpaceEx. The present work assumes piecewise monotonic and piecewise affine dynamics to accommodate existing tools. These assumptions can be lifted as tools mature to support urgent semantics and more complex dynamics. Efforts for such extensions~\cite{MinopoliF16a} are underway.


\bibliographystyle{ACM-Reference-Format}
\bibliography{report}


\begin{thebibliography}{27}


\ifx \showCODEN    \undefined \def \showCODEN     #1{\unskip}     \fi
\ifx \showDOI      \undefined \def \showDOI       #1{#1}\fi
\ifx \showISBNx    \undefined \def \showISBNx     #1{\unskip}     \fi
\ifx \showISBNxiii \undefined \def \showISBNxiii  #1{\unskip}     \fi
\ifx \showISSN     \undefined \def \showISSN      #1{\unskip}     \fi
\ifx \showLCCN     \undefined \def \showLCCN      #1{\unskip}     \fi
\ifx \shownote     \undefined \def \shownote      #1{#1}          \fi
\ifx \showarticletitle \undefined \def \showarticletitle #1{#1}   \fi
\ifx \showURL      \undefined \def \showURL       {\relax}        \fi
\providecommand\bibfield[2]{#2}
\providecommand\bibinfo[2]{#2}
\providecommand\natexlab[1]{#1}
\providecommand\showeprint[2][]{arXiv:#2}

\bibitem[\protect\citeauthoryear{IEEE}{PSL}{2010}]%
        {PSL}
 \bibinfo{year}{2010}\natexlab{}.
\newblock \bibinfo{title}{1850-2010 - {IEEE} {S}tandard for {P}roperty
  {S}pecification {L}anguage (PSL)}.
\newblock   (\bibinfo{year}{2010}).
\newblock
\urldef\tempurl%
\url{https://standards.ieee.org/findstds/standard/1850-2010.html}
\showURL{%
\tempurl}


\bibitem[\protect\citeauthoryear{IEEE}{SVA}{2012}]%
        {SVA}
 \bibinfo{year}{2012}\natexlab{}.
\newblock \bibinfo{title}{1800-2012 - {IEEE} {S}tandard for
  {S}ystem{V}erilog--{U}nified {H}ardware {D}esign, {S}pecification, and
  {V}erification {L}anguage}.
\newblock   (\bibinfo{year}{2012}).
\newblock
\urldef\tempurl%
\url{http://standards.ieee.org/findstds/standard/1800-2012.html}
\showURL{%
\tempurl}


\bibitem[\protect\citeauthoryear{Abraham}{Abraham}{2015}]%
        {hybridBenchmarks}
\bibfield{author}{\bibinfo{person}{Erika Abraham}.}
  \bibinfo{year}{2015}\natexlab{}.
\newblock \bibinfo{title}{Benchmarks of Continuous and Hybrid Systems}.
\newblock   (\bibinfo{year}{2015}).
\newblock
\urldef\tempurl%
\url{http://ths.rwth-aachen.de/research/projects/hypro/benchmarks-of-continuous-and-hybrid-systems/}
\showURL{%
\tempurl}


\bibitem[\protect\citeauthoryear{Ain et~al\mbox{.}}{Ain et~al\mbox{.}}{2016}]%
        {dyFET}
\bibfield{author}{\bibinfo{person}{A. Ain} {et~al\mbox{.}}}
  \bibinfo{year}{2016}\natexlab{}.
\newblock \showarticletitle{Feature Indented Assertions for Analog and
  Mixed-Signal Validation}.
\newblock \bibinfo{journal}{\emph{IEEE TCAD}} \bibinfo{volume}{35},
  \bibinfo{number}{11} (\bibinfo{date}{Nov} \bibinfo{year}{2016}),
  \bibinfo{pages}{1928--1941}.
\newblock


\bibitem[\protect\citeauthoryear{Alur et~al\mbox{.}}{Alur
  et~al\mbox{.}}{1995}]%
        {Alur01}
\bibfield{author}{\bibinfo{person}{R. Alur} {et~al\mbox{.}}}
  \bibinfo{year}{1995}\natexlab{}.
\newblock \showarticletitle{The algorithmic analysis of hybrid systems}.
\newblock \bibinfo{journal}{\emph{Theoretical Computer Science}}
  \bibinfo{volume}{138} (\bibinfo{year}{1995}), \bibinfo{pages}{3--34}.
\newblock


\bibitem[\protect\citeauthoryear{Alur, A.~Henzinger, and Ho}{Alur
  et~al\mbox{.}}{1996}]%
        {alur96-2}
\bibfield{author}{\bibinfo{person}{Rajeev Alur}, \bibinfo{person}{Thomas
  A.~Henzinger}, {and} \bibinfo{person}{Pei-Hsin Ho}.}
  \bibinfo{year}{1996}\natexlab{}.
\newblock \showarticletitle{Automatic Symbolic Verification of Embedded
  Systems}.
\newblock   \bibinfo{volume}{22} (\bibinfo{date}{04} \bibinfo{year}{1996}),
  \bibinfo{pages}{181 -- 201}.
\newblock


\bibitem[\protect\citeauthoryear{Annapureddy et~al\mbox{.}}{Annapureddy
  et~al\mbox{.}}{2011}]%
        {e_Annapureddy2011}
\bibfield{author}{\bibinfo{person}{Yashwanth Annapureddy} {et~al\mbox{.}}}
  \bibinfo{year}{2011}\natexlab{}.
\newblock \showarticletitle{S-taliro: A Tool for Temporal Logic Falsification
  for Hybrid Systems}. In \bibinfo{booktitle}{\emph{Proc. of TACAS}}
  \emph{(\bibinfo{series}{TACAS'11/ETAPS'11})}. \bibinfo{pages}{254--257}.
\newblock


\bibitem[\protect\citeauthoryear{Asarin et~al\mbox{.}}{Asarin
  et~al\mbox{.}}{2007}]%
        {asarin2007}
\bibfield{author}{\bibinfo{person}{Eugene Asarin} {et~al\mbox{.}}}
  \bibinfo{year}{2007}\natexlab{}.
\newblock \showarticletitle{Hybridization Methods for the Analysis of Nonlinear
  Systems}.
\newblock \bibinfo{journal}{\emph{Acta Inf.}}  \bibinfo{volume}{43 (7)}
  (\bibinfo{date}{Jan.} \bibinfo{year}{2007}), \bibinfo{pages}{451--476}.
\newblock
\showISSN{0001-5903}


\bibitem[\protect\citeauthoryear{Chutinan and Krogh}{Chutinan and
  Krogh}{2003}]%
        {AB01}
\bibfield{author}{\bibinfo{person}{A. Chutinan} {and} \bibinfo{person}{B.
  Krogh}.} \bibinfo{year}{2003}\natexlab{}.
\newblock \showarticletitle{Computational Techniques for Hybrid System
  Verification}.
\newblock \bibinfo{journal}{\emph{IEEE TAC}}  \bibinfo{volume}{48}
  (\bibinfo{year}{2003}), \bibinfo{pages}{64--75}.
\newblock


\bibitem[\protect\citeauthoryear{da~Costa and Dasgupta}{da~Costa and
  Dasgupta}{2015}]%
        {CostaD15}
\bibfield{author}{\bibinfo{person}{Antonio A.~Bruto da Costa} {and}
  \bibinfo{person}{Pallab Dasgupta}.} \bibinfo{year}{2015}\natexlab{}.
\newblock \showarticletitle{Formal Interpretation of Assertion-Based Features
  on {AMS} Designs}.
\newblock \bibinfo{journal}{\emph{{IEEE} Design {\&} Test}}
  \bibinfo{volume}{32}, \bibinfo{number}{1} (\bibinfo{year}{2015}),
  \bibinfo{pages}{9--17}.
\newblock


\bibitem[\protect\citeauthoryear{da~Costa and Dasgupta}{da~Costa and
  Dasgupta}{2017a}]%
        {CostaD17}
\bibfield{author}{\bibinfo{person}{Antonio A.~Bruto da Costa} {and}
  \bibinfo{person}{Pallab Dasgupta}.} \bibinfo{year}{2017}\natexlab{a}.
\newblock \showarticletitle{ForFET: {A} Formal Feature Evaluation Tool for
  Hybrid Systems}. In \bibinfo{booktitle}{\emph{Proc. of ATVA}}.
  \bibinfo{pages}{437--445}.
\newblock


\bibitem[\protect\citeauthoryear{da~Costa and Dasgupta}{da~Costa and
  Dasgupta}{2017b}]%
        {Costa_VLSID_2017}
\bibfield{author}{\bibinfo{person}{Antonio A.~Bruto da Costa} {and}
  \bibinfo{person}{Pallab Dasgupta}.} \bibinfo{year}{2017}\natexlab{b}.
\newblock \showarticletitle{{Generating AMS Behavioral Models with Formal
  Guarantees on Feature Accuracy}}. In \bibinfo{booktitle}{\emph{Proc. of
  VLSID}}.
\newblock


\bibitem[\protect\citeauthoryear{da~Costa, Dasgupta, and Frehse}{da~Costa
  et~al\mbox{.}}{2016}]%
        {Costa16}
\bibfield{author}{\bibinfo{person}{Antonio A.~Bruto da Costa},
  \bibinfo{person}{Pallab Dasgupta}, {and} \bibinfo{person}{Goran Frehse}.}
  \bibinfo{year}{2016}\natexlab{}.
\newblock \showarticletitle{{Formal Feature Analysis of Hybrid Automata}}. In
  \bibinfo{booktitle}{\emph{Proc. of MEMOCODE}}.
\newblock


\bibitem[\protect\citeauthoryear{Dang et~al\mbox{.}}{Dang
  et~al\mbox{.}}{2004}]%
        {dang04}
\bibfield{author}{\bibinfo{person}{Thao Dang} {et~al\mbox{.}}}
  \bibinfo{year}{2004}\natexlab{}.
\newblock \showarticletitle{Verification of Analog and Mixed-Signal Circuits
  Using Hybrid System Techniques}.
\newblock In \bibinfo{booktitle}{\emph{Proc. of {FMCAD}}}.
  \bibinfo{series}{LNCS}, Vol.~\bibinfo{volume}{3312}. \bibinfo{pages}{21--36}.
\newblock
\showISBNx{978-3-540-23738-9}


\bibitem[\protect\citeauthoryear{Donz{\'e}}{Donz{\'e}}{2010}]%
        {e_Donze2010}
\bibfield{author}{\bibinfo{person}{Alexandre Donz{\'e}}.}
  \bibinfo{year}{2010}\natexlab{}.
\newblock \showarticletitle{Breach, a Toolbox for Verification and Parameter
  Synthesis of Hybrid Systems}. In \bibinfo{booktitle}{\emph{Proc. of CAV}}
  \emph{(\bibinfo{series}{CAV'10})}. \bibinfo{pages}{167--170}.
\newblock


\bibitem[\protect\citeauthoryear{Donz{\'e} and Maler}{Donz{\'e} and
  Maler}{2010}]%
        {b_Donze2010}
\bibfield{author}{\bibinfo{person}{Alexandre Donz{\'e}} {and}
  \bibinfo{person}{Oded Maler}.} \bibinfo{year}{2010}\natexlab{}.
\newblock \showarticletitle{Robust Satisfaction of Temporal Logic over
  Real-valued Signals}. In \bibinfo{booktitle}{\emph{Proc. of FORMATS}}
  \emph{(\bibinfo{series}{FORMATS'10})}. \bibinfo{pages}{92--106}.
\newblock


\bibitem[\protect\citeauthoryear{Frehse et~al\mbox{.}}{Frehse
  et~al\mbox{.}}{2011}]%
        {spaceex11}
\bibfield{author}{\bibinfo{person}{Goran Frehse} {et~al\mbox{.}}}
  \bibinfo{year}{2011}\natexlab{}.
\newblock \showarticletitle{{S}pace{E}x: {S}calable {V}erification of {H}ybrid
  {S}ystems}. In \bibinfo{booktitle}{\emph{Proc. of {CAV}}}.
\newblock


\bibitem[\protect\citeauthoryear{Gao et~al\mbox{.}}{Gao et~al\mbox{.}}{2013}]%
        {GaoKC13}
\bibfield{author}{\bibinfo{person}{Sicun Gao} {et~al\mbox{.}}}
  \bibinfo{year}{2013}\natexlab{}.
\newblock \showarticletitle{{dReal: An {SMT} Solver for Nonlinear Theories over
  the Reals}}. In \bibinfo{booktitle}{\emph{Automated Deduction - {CADE-24}}}.
  \bibinfo{pages}{208--214}.
\newblock


\bibitem[\protect\citeauthoryear{Jha et~al\mbox{.}}{Jha et~al\mbox{.}}{2017}]%
        {d_JhaTSSS_2017}
\bibfield{author}{\bibinfo{person}{Susmit Jha} {et~al\mbox{.}}}
  \bibinfo{year}{2017}\natexlab{}.
\newblock \showarticletitle{TeLEx: Passive {STL} Learning Using Only Positive
  Examples}. In \bibinfo{booktitle}{\emph{Proc. of Runtime Verification 2017}}.
  \bibinfo{pages}{208--224}.
\newblock


\bibitem[\protect\citeauthoryear{Jha et~al\mbox{.}}{Jha et~al\mbox{.}}{2018}]%
        {c_Jha2018}
\bibfield{author}{\bibinfo{person}{Susmit Jha} {et~al\mbox{.}}}
  \bibinfo{year}{2018}\natexlab{}.
\newblock \showarticletitle{Safe Autonomy Under Perception Uncertainty Using
  Chance-Constrained Temporal Logic}.
\newblock \bibinfo{journal}{\emph{J. Autom. Reason.}} \bibinfo{volume}{60},
  \bibinfo{number}{1} (\bibinfo{date}{Jan.} \bibinfo{year}{2018}),
  \bibinfo{pages}{43--62}.
\newblock


\bibitem[\protect\citeauthoryear{Kong et~al\mbox{.}}{Kong
  et~al\mbox{.}}{2015}]%
        {KongGCC15}
\bibfield{author}{\bibinfo{person}{Soonho Kong} {et~al\mbox{.}}}
  \bibinfo{year}{2015}\natexlab{}.
\newblock \showarticletitle{{dReach: {\(\delta\)}-Reachability Analysis for
  Hybrid Systems}}. In \bibinfo{booktitle}{\emph{{TACAS}}}.
  \bibinfo{pages}{200--205}.
\newblock
\urldef\tempurl%
\url{https://doi.org/10.1007/978-3-662-46681-0_15}
\showDOI{\tempurl}


\bibitem[\protect\citeauthoryear{Maler and Nickovic}{Maler and
  Nickovic}{2004}]%
        {porv01}
\bibfield{author}{\bibinfo{person}{Oded Maler} {and} \bibinfo{person}{Dejan
  Nickovic}.} \bibinfo{year}{2004}\natexlab{}.
\newblock \showarticletitle{Monitoring temporal properties of continuous
  signals}. In \bibinfo{booktitle}{\emph{Proc. of FORMATS-FTRTFT}},
  Vol.~\bibinfo{volume}{3253}. \bibinfo{pages}{152--166}.
\newblock


\bibitem[\protect\citeauthoryear{Mukherjee and Dasgupta}{Mukherjee and
  Dasgupta}{2009}]%
        {ams-ltl1}
\bibfield{author}{\bibinfo{person}{S. Mukherjee} {and} \bibinfo{person}{P.
  Dasgupta}.} \bibinfo{year}{2009}\natexlab{}.
\newblock \showarticletitle{Incorporating Local Variables in Mixed-Signal
  Assertions.}. In \bibinfo{booktitle}{\emph{IEEE Int. Conf. TENCON}}.
\newblock


\bibitem[\protect\citeauthoryear{Nguyen and Johnson}{Nguyen and
  Johnson}{2015}]%
        {buck}
\bibfield{author}{\bibinfo{person}{Luan~Viet Nguyen} {and}
  \bibinfo{person}{Taylor~T Johnson}.} \bibinfo{year}{2015}\natexlab{}.
\newblock \showarticletitle{{Benchmark: DC-to-DC Switched-Mode Power
  Converters}}. In \bibinfo{booktitle}{\emph{ARCH14-15}},
  Vol.~\bibinfo{volume}{34}. \bibinfo{pages}{19--24}.
\newblock
\showISSN{2040-557X}


\bibitem[\protect\citeauthoryear{Ouaknine and Worrell}{Ouaknine and
  Worrell}{[n. d.]}]%
        {a_Ouaknine2008}
\bibfield{author}{\bibinfo{person}{Jo\"{e}l Ouaknine} {and}
  \bibinfo{person}{James Worrell}.} \bibinfo{year}{[n. d.]}\natexlab{}.
\newblock \showarticletitle{Some Recent Results in Metric Temporal Logic}. In
  \bibinfo{booktitle}{\emph{Proc. of FORMATS}} \emph{(\bibinfo{series}{FORMATS
  '08})}. \bibinfo{pages}{1--13}.
\newblock


\bibitem[\protect\citeauthoryear{Roehm et~al\mbox{.}}{Roehm
  et~al\mbox{.}}{2016}]%
        {HySTL16}
\bibfield{author}{\bibinfo{person}{Hendrik Roehm} {et~al\mbox{.}}}
  \bibinfo{year}{2016}\natexlab{}.
\newblock \showarticletitle{{STL} Model Checking of Continuous and Hybrid
  Systems}. In \bibinfo{booktitle}{\emph{Proc. of ATVA 2016}}.
  \bibinfo{pages}{412--427}.
\newblock


\bibitem[\protect\citeauthoryear{Steinhorst and Hedrich}{Steinhorst and
  Hedrich}{2008}]%
        {ASL2008}
\bibfield{author}{\bibinfo{person}{Sebastian Steinhorst} {and}
  \bibinfo{person}{Lars Hedrich}.} \bibinfo{year}{2008}\natexlab{}.
\newblock \showarticletitle{Model Checking of Analog Systems Using an Analog
  Specification Language}. In \bibinfo{booktitle}{\emph{Proc. of {DATE}}}.
  \bibinfo{pages}{324--329}.
\newblock
\showISBNx{978-3-9810801-3-1}


\end{thebibliography}


\end{document}